\newtheorem{theorem}{Theorem}
\newtheorem{lemma}{Lemma}
\newtheorem{assumption}{Assumption}
\newtheorem{remark}{Remark}
\newcommand{\T}{\mathrm{T}}
\newcommand{\Null}{\mathrm{Null}}
\newcommand{\one}{\mathbf{1}}
\newcommand{\rank}{\mathrm{rank}}
\newcommand{\sgn}{\mspace{2mu}\mathrm{sgn}}
\newcommand{\Ln}{\mspace{2mu}\mathrm{Ln}}
\begin{document}

\title{\LARGE \bf
Finite-time Stabilization of Circular Formations \\ using Bearing-only Measurements
}

\author{Shiyu Zhao,
        Feng Lin,
        Kemao Peng,
        Ben M. Chen
        and Tong H. Lee 
\thanks{S. Zhao, B. M. Chen and T. H. Lee are with the Department of Electrical and Computer Engineering, National University of Singapore, Singapore 117576, Singapore
    {\tt\small \{shiyuzhao, bmchen, eleleeth\}@nus.edu.sg}}
\thanks{F. Lin and K. Peng are with the Temasek Laboratories, National University of Singapore, Singapore 117456, Singapore
    {\tt\small \{linfeng, kmpeng\}@nus.edu.sg}}
}

\maketitle

\begin{abstract}
This paper studies decentralized formation control of multiple vehicles when each vehicle can only measure the local bearings of their neighbors by using bearing-only sensors.
Since the inter-vehicle distance cannot be measured, the target formation involves no distance constraints.
More specifically, the target formation considered in this paper is an angle-constrained circular formation, where each vehicle has exactly two neighbors and the angle at each vehicle subtended by its two neighbors is pre-specified.
To stabilize the target formation, we propose a discontinuous control law that only requires the sign information of the angle errors.
Due to the discontinuity of the proposed control law, the stability of the closed-loop system is analyzed by employing a locally Lipschitz Lyapunov function and nonsmooth analysis tools.
We prove that the target formation is locally finite-time stable with collision avoidance guaranteed.
The evolution of the vehicle positions in the plane is also characterized.
\end{abstract}

\begin{IEEEkeywords}
Bearing-only measurement; discontinuous dynamic system; finite-time stability; formation stabilization; Lyapunov function
\end{IEEEkeywords}

\IEEEpeerreviewmaketitle

\section{Introduction}
Formation control of multiple vehicles has been studied extensively in the last decade.
Inter-vehicle information exchange is a necessary condition for distributed formation control.
It is commonly assumed that each vehicle can obtain the relative \emph{position} information of their neighbor vehicles via, for example, wireless communication.
It is notable that position information essentially consists of two kinds of partial information: \emph{range} and \emph{bearing}.
In recent years, formation control using range-only \cite{cao2011,cao2012} or bearing-only \cite{bishopSCL,bishopconf2011relax,bishopconf2011quad,ErenIJC,nimaconf,nimaTR} measurements has become an active research area.
Up to now, many problems are still unsolved in this area.
In this paper, we will particularly study formation control using bearing-only measurements.
We assume that each vehicle is only able to measure the bearings of their neighbors by using, for example, monocular or omnidirectional cameras, which are inherently bearing-only sensors and have been applied in many control-related tasks.
Vision-based formation control \cite{nimaconf,nimaTR,TRA2002Vision} could be a potential application of our work.

A number of interesting problems arise when only bearing measurements are available for formation control.
One important problem is how to utilize these bearing-only measurements.
There are generally two possible schemes.
In the first scheme, each vehicle can track their neighbors using the bearing-only measurements such that the positions of their neighbors can be estimated and then used for formation control.
There exist various bearing-only target tracking algorithms (see, for example, \cite{1984TAC}).
But it should be noted that bearing-only target tracking requires certain observability conditions \cite{1981AES}.
As a trivial example, suppose two vehicles can measure the bearings of each other but have no relative motion.
Then it would be impossible for them to estimate their inter-vehicle distance using bearing-only tracking.
As a result, if the relative positions of the vehicles are supposed to be fixed in the target formation, the observability problem would become severe and then the first scheme is inapplicable.
In order to apply the first scheme anyway, one may adopt a stop-and-go strategy like the one proposed in \cite{cao2011}.

In this work we will focus on the second scheme, in which the formation control law is directly implemented based on the bearing-only measurements.
No vehicle position estimation is involved.
If merely bearing measurements are used for feedback control, the inter-vehicle distance in the formation would be uncontrollable.
As a result, any constraints involving inter-vehicle distance cannot be specified in the target formation.
It is only possible to specify the bearings of the edges that connect vehicles \cite{ErenIJC,bishopconf2011rigid} or the angles at each vehicle subtended by their neighbors \cite{bishopSCL,bishopconf2011relax,bishopconf2011quad}.
If the target formation is constrained by edge bearings, global bearing measurements are required.
That means the bearing measurements of different vehicles should be taken in one global coordinate frame.
As a comparison, if the target formation is constrained by angles, each vehicle may measure the bearings of their neighbors in their local coordinate frames.
In this paper, we will particularly study the angle-constrained case.
It should be noted that the realization of an angle-constrained target formation would not be unique.
More specifically, the orientation or translation of the formation in the plane, or the scale of the formation is not unique.
In our work, we make no parallel rigid assumptions \cite{eren2003,ErenIJC,bishopconf2011rigid}.
Hence the shape of the target formation might not be unique either.

Collision avoidance is an important issue in formation control problems.
This issue is especially important and also challenging to analyze in formation control using bearing-only measurements because the inter-vehicle distance cannot be measured.
In \cite{bishopSCL}, a bearing-only control law is proposed to globally stabilize a triangle formation of three vehicles.
It is proved that collision can be avoided naturally by the proposed control law.
In our previous work \cite{zhaoArxiv}, we extended the work in \cite{bishopSCL} and proposed a bearing-only control law to stabilize circular formations of an arbitrary number of vehicles.
By employing Lyapunov approaches, we proved that collision avoidance can be ensured if the initial angle errors are sufficiently small.
A similar idea will be adopted in this paper to tackle the collision avoidance issue. We will prove that the formation can be stabilized within finite time before any vehicles could possibly collide.

In this paper, we study distributed formation stabilization using local bearing-only measurements.
The target formation considered in this paper is an angle-constrained circular formation, where each vehicle has exactly two neighbors.
The underlying information flow is described by an undirected circular graph with fixed topology.
The angle at each vehicle subtended by its two neighbors is constrained in the target formation.
We propose a distributed discontinuous control law to stabilize the target formation.
The proposed control law only requires sign information of the angle errors and is able to stabilize the target formation in finite time.
Finite-time control has attracted much attention in recent years \cite{XielihuaAutomatica,XielihuaASME,LinZongliACC,JiangZhongpingSIAM,HuangJieSCL,corte2006automatica,wanglong2009finiteformation}, to name a few.
Besides fast convergence, finite-time control can also bring benefits such as disturbance rejection and robustness against uncertainties \cite{bernstein2000}.
Due to the discontinuity of the proposed control law, we employ a locally Lipschitz Lyapunov function and nonsmooth analysis tools \cite{bookFilippov,bookClarke,1987Paden,1999Bacciotti,corte2005SIAM,corte2006automatica,corte2008CSM} to prove the finite-time stability of the closed-loop system.
It is also proved that collision avoidance can be guaranteed if the initial angle errors are sufficiently small.

The paper is organized as follows.
Preliminaries regarding graph theory and nonsmooth analysis are introduced in Section \ref{section_preliminary}.
The formation control problem is formulated in Section \ref{section_problemstatement}.
The formation stability and behavior are analyzed in Section \ref{section_stabilityAnalysis}.
Section \ref{section_simulation} presents simulation results.
Conclusions are drawn in Section \ref{section_conclusion}.

\section{Notations and Preliminaries}\label{section_preliminary}

\subsection{Notations}
Given a symmetric positive semi-definite matrix $A\in\mathbb{R}^{n\times n}$, the eigenvalues of $A$ are denoted as $0\le\lambda_1(A)\le\lambda_2(A)\le\dots\le\lambda_n(A)$.
Let $\one=[1,\dots,1]^{\T}\in\mathbb{R}^n$, and $I$ be the identity matrix with appropriate dimensions.
Denote $|\cdot|$ as the absolute value of a real number, and $\|\cdot\|$ as the Euclidean norm of a vector.
Denote $\Null(\cdot)$ as the right null space of a matrix.
Let $[\,\cdot\,]_{ij}$ be the entry at the $i$th row and $j$th column of a matrix, and $[\,\cdot\,]_i$ be the $i$th entry of a vector.
Given a set $S$, denote $\overline{S}$ as its closure.
For any angle $\alpha\in\mathbb{R}$,
\begin{align}\label{eq_rotation_matrix}
    R(\alpha)=\left[
                \begin{array}{cc}
                  \cos \alpha & -\sin\alpha \\
                  \sin\alpha & \cos\alpha \\
                \end{array}
              \right]\in\mathbb{R}^{2\times 2}
\end{align}
is a rotation matrix satisfying $R^{-1}(\alpha)=R^{\T}(\alpha)=R(-\alpha)$.
Geometrically, $R(\alpha)$ rotates a vector in $\mathbb{R}^2$ counterclockwise through an angle $\alpha$ about the origin.

\subsection{Graph Theory}
A graph $\mathcal{G}=(\mathcal{V},\mathcal{E})$ consists of a vertex set $\mathcal{V}=\{1,\dots,n\}$ and an edge set $\mathcal{E}\subseteq \mathcal{V} \times \mathcal{V}$.
If $(i,j)\in \mathcal{E}$, then $i$ and $j$ are called to be adjacent.
The set of neighbors of vertex $i$ is denoted as $\mathcal{N}_i=\{j \in \mathcal{V} \ |\ (i,j)\in \mathcal{E}\}$.
A graph is undirected if each $(i,j)\in \mathcal{E}$ implies $(j,i)\in \mathcal{E}$, otherwise the graph is directed.
A path from $i$ to $j$ in a graph is a sequence of distinct nodes starting with $i$
and ending with $j$ such that consecutive vertices are adjacent.
If there is a path between any two vertices of graph $\mathcal{G}$, then $\mathcal{G}$ is said to be connected.
An undirected circular graph is a connected graph where every vertex has exactly two neighbors.

An incidence matrix of a directed graph is a matrix $E$ with rows indexed by
edges and columns indexed by vertices\footnote{In some literature such as \cite{graphbook}, the rows of an incidence matrix are indexed by
vertices and the columns are indexed by edges.}.
Suppose $(j,k)$ is the $i$th edge.
Then the entry of $E$ in the $i$th row and $k$th column is $1$, the one in the $i$th row and $j$th column is $-1$, and the others in the $i$th row are zero.
By definition, we have $E\one=0$.
If a graph is connected, the corresponding $E$ has rank $n-1$ (see \cite[Theorem 8.3.1]{graphbook}).
Then $\Null(E)=\mathrm{span}\{\one\}$.

\subsection{Nonsmooth Stability Analysis}\label{subsection_nonsmoothanalysis}
Next we introduce some useful concepts and facts regarding discontinuous dynamic systems \cite{bookFilippov,bookClarke,1987Paden,1999Bacciotti,corte2005SIAM,corte2006automatica,corte2008CSM}.

\subsubsection{Filippov Differential Inclusion}

Consider the dynamic system
\begin{align}\label{eq_generalSystem}
    \dot{x}(t)=f\left(x(t)\right),
\end{align}
where $f: \mathbb{R}^n\rightarrow \mathbb{R}^n$ is a measurable and essentially locally bounded function.
The Filippov differential inclusion \cite{bookFilippov} associated with the system \eqref{eq_generalSystem} is
\begin{align}\label{eq_FilippovInclusion}
    \dot{x}\in\mathcal{F}[f](x),
\end{align}
where $\mathcal{F}[f]: \mathbb{R}^n\rightarrow 2^{\mathbb{R}^n}$ is defined by
\begin{align}\label{eq_FilippovDefinition}
    \mathcal{F}[f](x)= \bigcap_{r>0}\bigcap_{\mu(S)=0}\overline{\mathrm{co}}\left\{f\left(B(x,r)\setminus S\right)\right\}.
\end{align}
In \eqref{eq_FilippovDefinition}, $\overline{\mathrm{co}}$ denotes convex closure, $B(x,r)$ denotes the open ball centered at $x$ with radius $r>0$, and $\mu(S)=0$ means that the Lebesgue measure of the set $S$ is zero.
The set-valued map $\mathcal{F}[f]$ associates each point $x$ with a set.
Note $\mathcal{F}[f](x)$ is multiple valued only if $f(x)$ is discontinuous at $x$.

A Filippov solution of \eqref{eq_generalSystem} on $[0,t_1]\subset\mathbb{R}$ is defined as an absolutely continuous function $x: [0,t_1]\rightarrow \mathbb{R}^n$ that satisfies \eqref{eq_FilippovInclusion} for almost all $t\in[0,t_1]$.
If $f(x)$ is measurable and essentially locally bounded, the existence of Filippov solutions can be guaranteed \cite[Lemma 2.5]{corte2005SIAM} \cite[Proposition 3]{corte2008CSM} though the uniqueness cannot.
The interested reader is referred to \cite[p. 52]{corte2008CSM} for the uniqueness conditions of Filippov solutions.
A solution is called maximal if it cannot be extended forward in time.
A set $\Omega$ is said to be weakly invariant (respectively strongly invariant) for \eqref{eq_generalSystem}, if for each $x(0)\in\Omega$, $\Omega$ contains at least one maximal solution (respectively all maximal solutions) of \eqref{eq_generalSystem}.

\subsubsection{Generalized Gradient}

Suppose $V: \mathbb{R}^n\rightarrow \mathbb{R}$ is a locally Lipschitz function.
If $V(x)$ is differentiable at $x$, denote $\nabla V(x)$ as the gradient of $V(x)$ with respect to $x$.
Let $M_V$ be the set where $V(x)$ fails to be differentiable.
The generalized gradient \cite{bookClarke,corte2005SIAM,corte2008CSM} of $V(x)$ is defined as
\begin{align*}
    \partial V(x)=\mathrm{co}\left\{\lim_{i\rightarrow +\infty} \nabla V(x_i) \ |\  x_i\rightarrow x,\, x_i\notin S \cup M_V \right\},
\end{align*}
where $\mathrm{co}$ denotes convex hull and $S$ is an arbitrary set of Lebesgue measure zero.
The generalized gradient is a set-valued map. 
If $V(x)$ is continuously differentiable at $x$, then $\partial V(x)=\{\nabla V(x)\}$.

Given any set $S\subseteq\mathbb{R}^n$, let $\Ln: 2^{\mathbb{R}^n}\rightarrow2^{\mathbb{R}^n}$ be the set-valued map that associates $S$ with the set of least-norm elements of $\overline{S}$.
If $S$ is convex, $\Ln(S)$ is singleton.
In this paper, we only apply $\Ln$ to generalized gradients which are always convex.
For a locally Lipschitz function $V(x)$, $\Ln(\partial V): \mathbb{R}^n\rightarrow\mathbb{R}^n$ is called the generalized gradient vector field.
The following fact \cite[Proposition 8]{corte2008CSM}
\begin{align}\label{eq_FilippovFact}
    \mathcal{F}\left[\Ln\left(\partial V(x)\right)\right]=\partial V(x)
\end{align}
will be very useful in our work.
A point $x$ is called a critical point if $0\in\partial V(x)$.
For a critical point $x$, it is obvious that $\Ln(\partial V(x))=\{0\}$.

\subsubsection{Set-valued Lie Derivative}
The evolution of a locally Lipschitz function $V(x)$ along the solutions to the differential inclusion $\dot{x}\in\mathcal{F}[f](x)$ can be characterized by the set-valued Lie derivative \cite{1999Bacciotti,corte2005SIAM,corte2008CSM}, which is defined by
\begin{align*}
    \widetilde{\mathcal{L}}_{\mathcal{F}}V(x)=\left\{ \ell\in\mathbb{R} \ |\  \exists \xi\in\mathcal{F}[f](x),\  \forall \zeta\in\partial V(x),\  \xi^{\T}\zeta=\ell \right\}.
\end{align*}
With a slight abuse of notation, we also denote $\widetilde{\mathcal{L}}_{f}V(x)=\widetilde{\mathcal{L}}_{\mathcal{F}}V(x)$.
The set-valued Lie derivative may be empty.
When $\widetilde{\mathcal{L}}_{\mathcal{F}}V(x)=\emptyset$, we take $\max \widetilde{\mathcal{L}}_{\mathcal{F}}V(x)=-\infty$ (see \cite{1999Bacciotti,corte2005SIAM,corte2008CSM}).

A function $V: \mathbb{R}^n\rightarrow \mathbb{R}$ is called regular \cite[p. 57]{corte2008CSM} at $x$ if the right directional derivative of $V(x)$ at $x$ exists and coincides with the generalized directional derivative of $V(x)$ at $x$.
Note a locally Lipschitz and convex function is regular.
The following two lemmas are useful for proving the stability of discontinuous systems using nonsmooth Lyapunov functions.
The next result can be found in \cite{1994Paden,1999Bacciotti,corte2005SIAM,corte2006automatica}.
\begin{lemma}\label{lemma_invariancePrinciple}
    Let $V: \mathbb{R}^n\rightarrow \mathbb{R}$ be a locally Lipschitz and regular function.
    Suppose the initial state is $x_0$ and let $\Omega(x_0)$ be the connected component of $\{x\in\mathbb{R}^n \ |\  V(x)\le V(x_0)\}$ containing $x_0$.
    Assume the set $\Omega(x_0)$ is bounded.
    If $\max \widetilde{\mathcal{L}}_{f}V(x)\le 0$ or $\widetilde{\mathcal{L}}_{f}V(x)=\emptyset$ for all $x\in\Omega(x_0)$,
    then $\Omega(x_0)$ is strongly invariant for \eqref{eq_generalSystem}.
    Let
    \begin{align}\label{eq_Z_f_V}
        Z_{f,V}=\{x\in\mathbb{R}^n \ |\  0\in\widetilde{\mathcal{L}}_{f}V(x)\}.
    \end{align}
    Then any solution of \eqref{eq_generalSystem} starting from $x_0$ converges to the largest weakly invariant set $M$ contained in $\overline{Z}_{f,V}\cap \Omega(x_0)$.
    Furthermore, if the set $M$ is a finite collection of points, then the limit of all solutions starting from $x_0$ exists and equals one of them.
\end{lemma}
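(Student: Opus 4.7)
My plan hinges on the nonsmooth chain rule for regular locally Lipschitz functions: for any Filippov solution $x(\cdot)$ of \eqref{eq_generalSystem}, the real-valued map $t\mapsto V(x(t))$ is absolutely continuous and satisfies $\tfrac{d}{dt}V(x(t))\in\widetilde{\mathcal{L}}_{f}V(x(t))$ for almost every $t$. Granting this (a standard consequence of the regularity of $V$, as in \cite{corte2008CSM}), the lemma splits into three successive claims which I will address in order: strong invariance of $\Omega(x_0)$, convergence to the largest weakly invariant set $M\subseteq \overline{Z}_{f,V}\cap\Omega(x_0)$, and the singleton-limit corollary when $M$ is finite.

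First, I would establish strong invariance of $\Omega(x_0)$. Pick any maximal Filippov solution $x(\cdot)$ with $x(0)=x_0$. The hypothesis $\widetilde{\mathcal{L}}_{f}V(x)\subseteq(-\infty,0]$ on $\Omega(x_0)$, together with the convention $\max\emptyset=-\infty$, yields $\tfrac{d}{dt}V(x(t))\le 0$ whenever the trajectory is in $\Omega(x_0)$, hence $V(x(t))\le V(x_0)$ there. Because $x(\cdot)$ is continuous and $\Omega(x_0)$ is the connected component of $\{V\le V(x_0)\}$ containing $x_0$, the trajectory cannot leave $\Omega(x_0)$ without first strictly raising $V$, which continuity rules out. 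Boundedness of $\Omega(x_0)$ then precludes finite-time escape, so $x(\cdot)$ exists on $[0,\infty)$ and remains in $\Omega(x_0)$.

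Next, for the LaSalle-type conclusion, I would pass to the $\omega$-limit set $\Omega_\omega$ of $x(\cdot)$. Since trajectories are confined to the bounded set $\Omega(x_0)$, $\Omega_\omega$ is nonempty, compact, connected, and weakly invariant under $\mathcal{F}[f]$ (standard facts for upper-semicontinuous, nonempty-compact-convex-valued differential inclusions). Monotonicity and boundedness of $V(x(\cdot))$ yield a limit $c\in\mathbb{R}$, and continuity of $V$ forces $V\equiv c$ on $\Omega_\omega$. For any $y_0\in\Omega_\omega$, weak invariance furnishes a solution $y(\cdot)\subseteq\Omega_\omega$ with $y(0)=y_0$, along which $V(y(t))\equiv c$; the chain rule then forces $0\in\widetilde{\mathcal{L}}_{f}V(y(t))$ for a.e.\ $t$, so $y(t)\in Z_{f,V}$ a.e.\ and, by continuity, $y_0\in\overline{Z}_{f,V}$. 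Hence $\Omega_\omega\subseteq\overline{Z}_{f,V}\cap\Omega(x_0)$, and because $\Omega_\omega$ is itself weakly invariant it is contained in the largest such set $M$; standard arguments then give $x(t)\to M$.

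The singleton-limit corollary is immediate from connectedness of $\Omega_\omega$: a connected subset of a finite collection of points must be a single point, so if $M$ is finite then $\Omega_\omega$ is one point, to which $x(t)$ converges. The principal technical obstacle throughout is invoking the nonsmooth chain rule and the weak-invariance/connectedness properties of $\Omega_\omega$ cleanly in the differential-inclusion setting; both rest on the regularity of $V$ and on the upper semicontinuity together with the nonempty-compact-convex-valuedness of $\mathcal{F}[f]$, and I would cite these from \cite{corte2005SIAM,corte2008CSM} rather than reprove them here.
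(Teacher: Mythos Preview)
The paper does not actually prove this lemma; it merely cites it from \cite{1994Paden,1999Bacciotti,corte2005SIAM,corte2006automatica} as a known result. Your proposal correctly reconstructs the standard proof found in those references: the nonsmooth chain rule for regular $V$, monotonicity of $V$ along Filippov solutions giving strong invariance, the $\omega$-limit set being nonempty, compact, connected, and weakly invariant, constancy of $V$ on the $\omega$-limit set forcing $0\in\widetilde{\mathcal{L}}_fV$ almost everywhere along invariant solutions, and connectedness handling the finite-$M$ case. This is exactly the argument one finds in \cite{1999Bacciotti} and \cite{corte2008CSM}, so there is nothing to compare against a distinct paper-internal proof.
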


The next result can be found in \cite{1987Paden,corte2005SIAM,corte2006automatica}.
\begin{lemma}\label{lamma_finite_time}
    Let $V: \mathbb{R}^n\rightarrow \mathbb{R}$ be a locally Lipschitz and regular function.
    Suppose the initial state is $x_0$ and let $S$ be a compact and strongly invariant set for \eqref{eq_generalSystem}.
    If $\max \widetilde{\mathcal{L}}_{f}V(x)\le -\kappa<0$ almost everywhere on $S\setminus Z_{f,V}$,
    then any solution of \eqref{eq_generalSystem} starting at $x_0\in S$ reaches $Z_{f,V}\cap S$ in finite time.
    The convergence time is upper bounded by $\left(V(x_0)-\min_{x\in S} V(x)\right)/\kappa$.
\end{lemma}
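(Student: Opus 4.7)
The plan is to combine the nonsmooth chain rule, the strong invariance of $S$, and a contradiction argument, mimicking the classical smooth-Lyapunov finite-time convergence proof.

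First I would apply the chain rule for locally Lipschitz regular functions along Filippov solutions: for any Filippov solution $x(\cdot)$ of \eqref{eq_generalSystem}, the map $t\mapsto V(x(t))$ is absolutely continuous, and at almost every $t$ its derivative exists and belongs to $\widetilde{\mathcal{L}}_{f}V(x(t))$. This immediately yields $\dot V(x(t))\le \max\widetilde{\mathcal{L}}_{f}V(x(t))$ at almost every such $t$, and in particular forces $\widetilde{\mathcal{L}}_{f}V(x(t))$ to be nonempty at those times.

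Next I would invoke strong invariance to confine the trajectory and then exploit the uniform decay. Since $x_{0}\in S$ and $S$ is strongly invariant, $x(t)\in S$ for all $t\ge 0$; combining with the hypothesis $\max\widetilde{\mathcal{L}}_{f}V(x)\le -\kappa$ a.e.\ on $S\setminus Z_{f,V}$, I would conclude $\dot V(x(t))\le -\kappa$ for almost every $t$ at which $x(t)\notin Z_{f,V}$. Setting $T:=(V(x_{0})-\min_{x\in S}V(x))/\kappa$ and $t^{*}:=\inf\{t\ge 0:x(t)\in Z_{f,V}\cap S\}$, I would argue by contradiction: if $t^{*}>T$, then integrating the bound over $[0,T+\varepsilon]$ for some small $\varepsilon>0$ gives
\begin{align*}
    V(x(T+\varepsilon))\le V(x_{0})-\kappa(T+\varepsilon)<\min_{x\in S}V(x),
\end{align*}
which contradicts $x(T+\varepsilon)\in S$. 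Hence $t^{*}\le T$, which is exactly the asserted convergence-time bound.

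The hard part will be the passage from the ``almost everywhere in the state variable $x$'' bound on the Lie derivative to the ``almost everywhere in time $t$'' bound along the trajectory, since Filippov solutions can in principle dwell on low-dimensional sets. This is precisely the measure-theoretic subtlety handled by the nonsmooth framework cited earlier: regularity of $V$ together with the absolute continuity of Filippov solutions ensures that the exceptional times form a Lebesgue-null subset of $\mathbb{R}$, so the integration step in the contradiction argument is legitimate despite the discontinuity of $f$.
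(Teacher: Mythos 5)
The paper does not actually prove this lemma: it is imported from the cited references (Paden--Sastry, Cort\'es--Bullo, Cort\'es) with no in-paper argument, so there is nothing internal to compare against. Your proof is the standard one from those sources, and its skeleton is correct: the nonsmooth chain rule for locally Lipschitz regular $V$ along Filippov solutions gives $\frac{d}{dt}V(x(t))\in\widetilde{\mathcal{L}}_{f}V(x(t))$ for almost every $t$; strong invariance of the compact set $S$ keeps the trajectory (and guarantees its forward completeness) where the decay bound applies; and integrating $\frac{d}{dt}V(x(t))\le-\kappa$ against the lower bound $V\ge\min_{x\in S}V(x)$ on $S$ yields the contradiction and the stated upper bound $\left(V(x_0)-\min_{x\in S}V(x)\right)/\kappa$ on the hitting time of $Z_{f,V}\cap S$.

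The one genuine soft spot is exactly the one you flag, and your final sentence asserts rather than proves it. The hypothesis bounds $\max\widetilde{\mathcal{L}}_{f}V(x)$ only for almost every $x$ with respect to Lebesgue measure on $\mathbb{R}^n$, whereas the chain rule is a pointwise statement at $x(t)$ for almost every $t$. An absolutely continuous Filippov solution can spend a set of times of positive measure inside a Lebesgue-null subset of $\mathbb{R}^n$ (an equilibrium or a sliding surface are the obvious examples), so ``the bad $x$-set is null'' does not by itself imply ``the bad $t$-set is null''; regularity of $V$ and absolute continuity of $x(\cdot)$ do not close this by themselves. The honest fixes are either to strengthen the hypothesis to hold everywhere on $S\setminus Z_{f,V}$ (which is how the bound is actually verified in Theorem~\ref{theorem_stability}, where $\ell\le-\kappa$ is shown for \emph{every} $\varepsilon\in\Omega(\varepsilon(0))\setminus\{0\}$), or to argue via the structure of the objects involved --- both $\mathcal{F}[f]$ and $\partial V$ are defined so as to be insensitive to modifications of $f$ and $\nabla V$ on null sets --- which is the route the cited literature implicitly takes. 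As a review point: your proof matches the standard argument and is acceptable at the level of rigor of the sources the paper cites, but the measure-theoretic passage should be stated as an appeal to those sources (or to the everywhere-version of the hypothesis), not as something that ``ensures'' itself.
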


\section{Problem Statement}\label{section_problemstatement}
In this section, we first describe the formation control problem that we are going to solve.
Then we present our proposed control law and derive the closed-loop system dynamics.

\subsection{Angle-constrained Circular Formation}

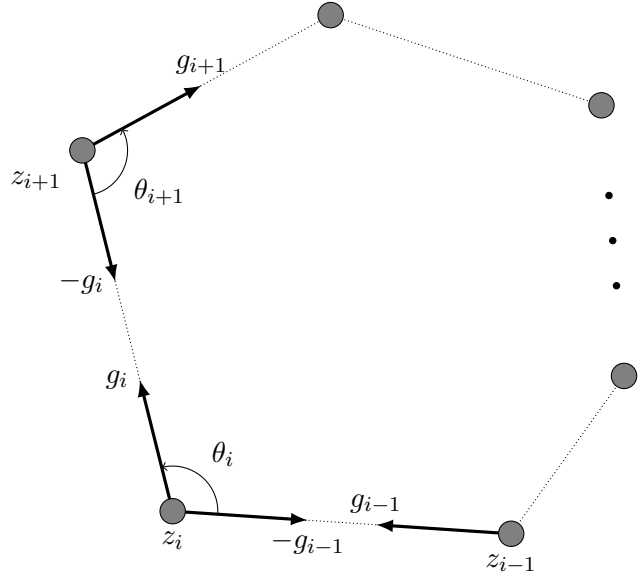
\begin{figure}
  \centering
    \def\myscale{0.6}
    \begin{tikzpicture}[scale=\myscale]
            \coordinate (zim2) at (10,3);
            \coordinate (zim1) at (7.5,-0.5);
            \coordinate (zi)   at (0,0);
            \coordinate (zip1) at (-2,8);
            \coordinate (zip2) at (3.5,11);
            \coordinate (zip3) at (9.5,9);
            \draw [densely dotted, thin] (zim2)--(zim1)--(zi)--(zip1)--(zip2)--(zip3);
            \def\unitlength{3cm}
            \draw[->, >=latex, very thick] (zi) -- ($(zi)!\unitlength!(zim1)$) node [below] {$-g_{i-1}$};
            \draw[->, >=latex, very thick] (zi) -- ($(zi)!\unitlength!(zip1)$) node [left] {$g_{i}$};
            \draw[->, >=latex, very thick] (zim1) -- ($(zim1)!\unitlength!(zi)$) node [above] {$g_{i-1}$};
            \draw[->, >=latex, very thick] (zip1) -- ($(zip1)!\unitlength!(zi)$) node [left] {$-g_{i}$};
            \draw[->, >=latex, very thick] (zip1) -- ($(zip1)!\unitlength!(zip2)$) node [above] {$g_{i+1}$};
            \def\angleRadius{1cm}
            \draw[->] let \p1=(zi), \p2=(zim1), \p3=(zip1), \n1={atan2(\x2-\x1,\y2-\y1)}, \n2={atan2(\x3-\x1,\y3-\y1)}  in
                    ($(\p1)!\angleRadius!(\p2)$) arc (\n1:\n2:\angleRadius);
            \draw[] let \p1=(zi), \p2=(zim1), \p3=(zip1), \n1={atan2(\x2-\x1,\y2-\y1)}, \n2={atan2(\x3-\x1,\y3-\y1)} in
                    (\p1)+(\n1/2+\n2/2:\angleRadius) node[above right] {$\theta_i$};
            \draw[->] let \p1=(zip1), \p2=(zi), \p3=(zip2), \n1={atan2(\x2-\x1,\y2-\y1)}, \n2={atan2(\x3-\x1,\y3-\y1)}  in
                    ($(\p1)!\angleRadius!(\p2)$) arc (\n1:\n2:\angleRadius);
            \draw[] let \p1=(zip1), \p2=(zi), \p3=(zip2), \n1={atan2(\x2-\x1,\y2-\y1)}, \n2={atan2(\x3-\x1,\y3-\y1)} in
                    (\p1)+(\n1/2+\n2/2:\angleRadius) node[below right] {$\theta_{i+1}$};
            \def\radius{8pt}
            \draw [fill=gray](zi) circle [radius=\radius];
            \draw [fill=gray](zim1) circle [radius=\radius];
            \draw [fill=gray](zim2) circle [radius=\radius];
            \draw [fill=gray](zip1) circle [radius=\radius];
            \draw [fill=gray](zip2) circle [radius=\radius];
            \draw [fill=gray](zip3) circle [radius=\radius];
            \draw (zi) node[below=\radius/2] {$z_{i}$};
            \draw (zim1) node[below=\radius/2] {$z_{i-1}$};
            \draw (zip1) node[below left=\radius/2] {$z_{i+1}$};
            \def\dotradius{2pt}
            \draw[] let \p1=(zim2), \p2=(zip3) in
                [fill=black](\p2)+(\x1/2-\x2/2,\y1/2-\y2/2) circle [radius=\dotradius];
            \draw[] let \p1=(zim2), \p2=(zip3) in
                [fill=black](\p2)+(\x1/3-\x2/3,\y1/3-\y2/3) circle [radius=\dotradius];
            \draw[] let \p1=(zim2), \p2=(zip3) in
                [fill=black](\p2)+(\x1*2/3-\x2*2/3,\y1*2/3-\y2*2/3) circle [radius=\dotradius];

\end{tikzpicture}
  \caption{An illustration of circular formations.}
  \label{fig_circular_formation}
\end{figure}

Consider $n$ ($n\ge3$) vehicles in the plane.
The target formation considered in this paper is an angle-constrained circular (or polygon) formation.
The underlying information flow among the vehicles is described by an undirected circular graph with fixed topology.
By indexing the vehicles properly, we can have $\mathcal{N}_i=\{i-1,i+1\}$ for $i\in\{1,\dots,n\}$,
which means vehicle $i$ can measure the bearings of vehicles $i-1$ and $i+1$.
Note that the indices $i-1$ and $i+1$ are taken modulo $n$ in this paper.
Denote the position of vehicle $i$ as $z_i\in\mathbb{R}^2$, and the edge between vehicles $i$ and $i+1$ as $e_i=z_{i+1}-z_i$.
The unit-length vector $g_i={e_i}/{\|e_i\|}$ characterizes the relative bearing between vehicles $i$ and $i+1$ (see Figure~\ref{fig_circular_formation}).
Hence the measurements of vehicle $i$ consist of $g_{i}$ and $-g_{i-1}$.
It should be noted that vehicle $i$ may measure the bearings $g_{i}$ and $-g_{i-1}$ in its local coordinate frame.
But in order to analyze the dynamics of the entire system, we need to write these bearing measurements in a global coordinate frame.

The angle subtended by vehicles $i-1$ and $i+1$ at vehicle $i$ is denoted as $\theta_i\in[0,2\pi)$.
More specifically, rotating $-g_{i-1}$ counterclockwise through an angle $\theta_i$ about vehicle $i$ yields $g_i$ (see Figure~\ref{fig_circular_formation}).
That can be mathematically expressed as
\begin{align}\label{eq_gigi-1}
    g_i=R(\theta_i)(-g_{i-1}),
\end{align}
where $R(\cdot)$ is the rotation matrix in \eqref{eq_rotation_matrix}.
By defining $\theta_i$ in the above way, $\theta_i$ and $\theta_{i+1}$ are on the same side of edge $e_i$ for all $i\in\{1,\dots,n\}$.
Consequently the quantity $\sum_{i=1}^n \theta_i$ is invariant to the positions of the vehicles because the sum of the interior angles of a polygon is constant.
In the target formation, the angle $\theta_i$ is specified as $\theta_i^*\in[0,2\pi)$.
Hence if $\sum_{i=1}^n \theta_i(0)=\sum_{i=1}^n \theta_i^*$, then we have $\sum_{i=1}^n \theta_i\equiv\sum_{i=1}^n \theta_i^*$.

The target angles $\{\theta_i^*\}_{i=1}^n$ should be feasible such that there exist $\{z_i\}_{i=1}^n$ ($z_i\ne z_j$ for $i\ne j$) to realize the target formation.
Since the target formation is constrained only by angles, its realization would be non-unique.
Specifically, the orientation, translation and scale of the target formation is non-unique.
Moreover, since we make no assumptions about parallel rigidity \cite{eren2003,bishopconf2011rigid,ErenIJC}, the shape of the target formation may not be unique either.
In fact, the shape of a circular formation cannot be uniquely determined by specifying the angles unless $n=3$.
In order to control the shape of the formation using bearing-only measurements, the underlying information flow should be more complicated than a circular graph.
For example, each vehicles should correspond to more than one angle.
We leave formation shape control using bearing-only measurements for future work.

\subsection{Proposed Control Law}
Suppose that no vehicles are collocated in the initial formation, i.e., $z_i(0)\ne z_j(0)$ for all $i\ne j$.
Consider the dynamics of each vehicle as a single integrator: $\dot{z}_i=u_i$.
Our task is to design $u_i$ to steer vehicles from their initial positions to a target formation.
The angle error corresponding to vehicle $i$ is chosen as
\begin{align}\label{eq_error_definition}
    \varepsilon_i
    =\cos\theta_i-\cos\theta_i^* =-g_i^{\T}g_{i-1}-\cos\theta_i^*.
\end{align}
The reason why we use a cosine function to define the angle error $\varepsilon_i$ is that $\cos\theta_i$ can be conveniently expressed as the inner product of the two bearing measurements $g_i$ and $-g_{i-1}$.
The proposed control law for vehicle $i$ is
\begin{align}\label{eq_controlLaw}
        \dot{z}_i=\sgn(\varepsilon_i)(g_i-g_{i-1}),
\end{align}
where
\begin{align*}
        \sgn(\varepsilon_i)=\left\{
          \begin{array}{l l}
            1  & \mbox{if } \varepsilon_i>0\\
            0  & \mbox{if } \varepsilon_i=0\\
            -1 & \mbox{if } \varepsilon_i<0
          \end{array}.\right.
\end{align*}
For vector arguments, $\sgn(\cdot)$ is defined component-wise.

\begin{remark}
    Compared to the control laws in \cite{bishopSCL,zhaoArxiv}, the one \eqref{eq_controlLaw} also steers the vehicles along the bisectors of their corresponding angles, respectively.
    But the control law \eqref{eq_controlLaw} is discontinuous and only requires the sign information of the angle errors.
    Due to the discontinuity, classical Lyapunov approaches are inapplicable here.
    We will prove the stability of the closed-loop system by using a locally Lipschitz Lyapunov function and nonsmooth analysis tools.
\end{remark}

From Figure \ref{fig_circular_formation} or equation \eqref{eq_gigi-1}, it is obvious to see that $g_i-g_{i-1}=0$ when $\theta_i=\pi$.
Hence the control law \eqref{eq_controlLaw} would be ineffective in the case of $\theta_i=\pi$ even though $\varepsilon_i$ is still nonzero.
Moreover, when $\theta_i=0$, vehicles $i-1$ and $i+1$ are located on the same side of vehicle $i$.
Since bearing is usually measured by optical sensors such as cameras, the bearing of vehicle $i-1$ or $i+1$ may not measurable by vehicle $i$ due to line-of-sight occlusion in the case of $\theta_i=0$.
Therefore, we adopt the following assumption.
\begin{assumption}\label{assumption}
    In the target formation, $\theta_i^*\ne 0$ and $\theta_i^*\ne \pi$ for all $i\in\{1,\dots,n\}$.
\end{assumption}
By Assumption \ref{assumption}, the target angle $\theta_i^*$ is in either $(0,\pi)$ or $(\pi,2\pi)$.
In other words, no three consecutive vehicles in the target formation are collinear.
The collinear case is a difficulty in many formation control problems (see, for example, \cite{Francis2009IJC,Francis2010TAC,Huang2012,bishopconf2011quad}).
At last, in order to analyze the dynamics of the whole system, we need to write the bearings $g_i$ for all $i\in\{1,\dots,n\}$ in a global coordinate frame.
But the control law \eqref{eq_controlLaw} can be implemented distributedly even if $g_i$ and $-g_{i-1}$ are measured in the local coordinate frame of vehicle $i$.

\subsection{Error Dynamics}
Denote $\varepsilon=[\varepsilon_1,...,\varepsilon_n]^{\T}\in\mathbb{R}^n$.
Now we derive the dynamics of $\varepsilon$.
Since $g_i=e_i/\|e_i\|$, the time derivative of $g_i$ is
\begin{align}\label{eq_gi_dot}
    \dot{g}_i=\frac{1}{\|e_i\|}P_i\dot{e}_i,
\end{align}
where $P_i=I-g_ig_i^{\T}$. Note $P_i$ is an orthogonal projection matrix satisfying $P_i^{\T}=P_i$ and $P_i^2=P_i$.
Moreover, $P_i$ is positive semi-definite and $\Null(P_i)=\mathrm{span}\{g_i\}$.
Since $e_i=z_{i+1}-z_i$, by the control law \eqref{eq_controlLaw}, the time derivative of $e_i$ is given by
\begin{align}\label{eq_e_dot}
    \dot{e}_i \nonumber
    &=\dot{z}_{i+1}-\dot{z}_{i} \\ \nonumber
    &=\sgn(\varepsilon_{i+1})(g_{i+1}-g_{i}) - \sgn(\varepsilon_i)(g_i-g_{i-1}) \\
    &=\sgn(\varepsilon_{i+1})g_{i+1}+\sgn(\varepsilon_i)g_{i-1} - \left[\sgn(\varepsilon_{i+1})+\sgn(\varepsilon_{i})\right] g_i.
\end{align}
Substituting \eqref{eq_e_dot} into \eqref{eq_gi_dot} and using the fact that $P_ig_i=0$ yield
\begin{align*}
    \dot{g}_i=\frac{1}{\|e_i\|}P_i\left[ \sgn(\varepsilon_{i+1})g_{i+1}+\sgn(\varepsilon_i)g_{i-1} \right].
\end{align*}
Recall $\varepsilon_i=-g_i^{\T}g_{i-1}-\cos(\theta_i^*)$ as shown in \eqref{eq_error_definition} and $\theta_i^*$ is constant.
Then
\begin{align*}
    \dot{\varepsilon}_i \nonumber
    &=-g_i^{\T}\dot{g}_{i-1}-g_{i-1}^{\T}\dot{g}_i \\ \nonumber
    &=-\frac{1}{\|e_{i-1}\|}g_i^{\T}P_{i-1}\left[ \sgn(\varepsilon_{i})g_{i}+\sgn(\varepsilon_{i-1})g_{i-2} \right]
    -\frac{1}{\|e_i\|}g_{i-1}^{\T}P_i\left[ \sgn(\varepsilon_{i+1})g_{i+1}+\sgn(\varepsilon_i)g_{i-1} \right] \\ \nonumber
    &= - a_{i(i-1)}\sgn(\varepsilon_{i-1}) - a_{ii}\sgn(\varepsilon_i) - a_{i(i+1)}\sgn(\varepsilon_{i+1}),
\end{align*}
where
\begin{align*}
    a_{i(i-1)} &= \frac{1}{\|e_{i-1}\|}g_i^{\T}P_{i-1}g_{i-2}, \nonumber \\
    a_{ii}     &= \frac{1}{\|e_{i-1}\|}g_i^{\T}P_{i-1}g_{i} + \frac{1}{\|e_i\|}g_{i-1}^{\T}P_ig_{i-1}, \nonumber \\
    a_{i(i+1)} &= \frac{1}{\|e_i\|}g_{i-1}^{\T}P_ig_{i+1}.
\end{align*}
Hence the dynamics of $\varepsilon$ can be written as
\begin{align}\label{eq_errorSystem}
    \dot{\varepsilon}=-A\sgn(\varepsilon),
\end{align}
where $[A]_{i(i-1)}=a_{i(i-1)}$, $[A]_{ii}=a_{ii}$ and $[A]_{i(i+1)}=a_{i(i+1)}$ for all $i\in\{1,\dots,n\}$; and all the other entries of $A$ are zero.
By changing the index $i$ of $a_{i(i+1)}$ to $i-1$, we can obtain the formula of $a_{(i-1)i}$.
It is easy to see that $a_{(i-1)i}=a_{i(i-1)}$ for all $i$ and hence $A$ is symmetric.
The next lemma shows that $A$ is also positive semi-definite.

\begin{lemma}\label{lemma_A_positivedefinite}
    For any $x=[x_1,\dots,x_n]^\T\in\mathbb{R}^n$,
    \begin{align}\label{eq_A_is_positive_definite}
        x^\T Ax
        =\sum_{i=1}^n \frac{1}{\|e_i\|}\left(g_{i+1}x_{i+1}+g_{i-1}x_i\right)^{\T} P_i \left(g_{i+1}x_{i+1}+g_{i-1}x_i\right)
        \ge 0.
    \end{align}
    As a result, the matrix $A$ in \eqref{eq_errorSystem} is positive semi-definite.
\end{lemma}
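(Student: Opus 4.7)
The plan is a direct algebraic verification of the identity, followed by an immediate appeal to the positive semi-definiteness of each projection $P_i$. Since $A$ is tridiagonal (with circular wrap-around) and symmetric, $x^\T A x$ reduces to diagonal terms $\sum_i x_i^2 a_{ii}$ plus twice the off-diagonal terms $\sum_i x_i x_{i+1} a_{i(i+1)}$, where I use $a_{(i-1)i}=a_{i(i-1)}$ and the usual shift $i\mapsto i+1$ to collapse the $a_{i(i-1)}$ contributions into the $a_{i(i+1)}$ ones.

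Next I would expand the claimed right-hand side. Writing $v_i=g_{i+1}x_{i+1}+g_{i-1}x_i$, the quadratic form $v_i^\T P_i v_i$ splits into three pieces
\begin{align*}
v_i^\T P_i v_i
= x_{i+1}^2\, g_{i+1}^\T P_i g_{i+1}
+ 2 x_i x_{i+1}\, g_{i-1}^\T P_i g_{i+1}
+ x_i^2\, g_{i-1}^\T P_i g_{i-1}.
\end{align*}
After dividing by $\|e_i\|$ and summing over $i$, I would reindex the first piece via $i\mapsto i-1$ so that its $x_{i+1}^2$ becomes $x_i^2$ with coefficient $\tfrac{1}{\|e_{i-1}\|} g_i^\T P_{i-1} g_i$. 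Combined with the third piece, the total coefficient of $x_i^2$ is exactly $a_{ii}$, and the coefficient of $2 x_i x_{i+1}$ is exactly $\tfrac{1}{\|e_i\|} g_{i-1}^\T P_i g_{i+1} = a_{i(i+1)}$. This matches the expansion of $x^\T A x$ from the first step, proving the identity.

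The inequality $x^\T Ax\ge 0$ then follows at once, because each summand on the right is a quadratic form $v_i^\T P_i v_i$ and $P_i=I-g_ig_i^\T$ is an orthogonal projector, hence positive semi-definite. Consequently $A$ is positive semi-definite, which is the claim.

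I do not expect any real obstacle here; the only thing that needs care is the circular reindexing (indices modulo $n$) when matching the three summands of $v_i^\T P_i v_i$ to the three nonzero entries of the $i$th row of $A$. Once the bookkeeping is done consistently, the identity is immediate and the sign conclusion is automatic.
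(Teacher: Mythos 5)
Your proposal is correct and follows essentially the same route as the paper: expand $x^\T A x$ and the sum of quadratic forms $v_i^\T P_i v_i$, match coefficients of $x_i^2$ and $x_i x_{i+1}$ via circular reindexing and the symmetry $a_{(i-1)i}=a_{i(i-1)}$, then conclude from the positive semi-definiteness of each projector $P_i$. The only difference is cosmetic (you expand the right-hand side and match backward, while the paper transforms $x^\T A x$ forward into the completed-square form), so no further comment is needed.
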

\begin{proof}
For any vector $x=[x_1,\dots,x_n]^\T\in\mathbb{R}^n$, we have
    \begin{align*}
        x^{\T} A x
        &=\sum_{i=1}^n a_{i(i-1)}x_ix_{i-1} + a_{ii}x_i^2 + a_{i(i+1)}x_ix_{i+1} \nonumber \\
        &= \sum_{i=1}^n \left(\frac{1}{\|e_{i-1}\|}g_i^{\T}P_{i-1}g_{i-2}\right) x_ix_{i-1}
         +\sum_{i=1}^n\left( \frac{1}{\|e_{i-1}\|}g_i^{\T}P_{i-1}g_{i}\right)x_{i}^2 \nonumber \\
         &\qquad \quad +\sum_{i=1}^n \left(\frac{1}{\|e_i\|}g_{i-1}^{\T}P_ig_{i-1} \right)x_i^2
         +\sum_{i=1}^n \left(\frac{1}{\|e_i\|}g_{i-1}^{\T}P_ig_{i+1}\right) x_ix_{i+1} \nonumber \\
        &= \sum_{i=1}^n \left(\frac{1}{\|e_{i}\|}g_{i+1}^{\T}P_{i}g_{i-1}\right) x_{i+1}x_{i}
         +\sum_{i=1}^n\left( \frac{1}{\|e_{i}\|}g_{i+1}^{\T}P_{i}g_{i+1}\right)x_{i+1}^2 \nonumber \\
         &\qquad \quad +\sum_{i=1}^n \left(\frac{1}{\|e_i\|}g_{i-1}^{\T}P_ig_{i-1} \right)x_i^2
         +\sum_{i=1}^n \left(\frac{1}{\|e_i\|}g_{i-1}^{\T}P_ig_{i+1}\right) x_ix_{i+1} \nonumber \\
        &=\sum_{i=1}^n \frac{1}{\|e_i\|}\left(g_{i+1}x_{i+1}+g_{i-1}x_i\right)^{\T} P_i \left(g_{i+1}x_{i+1}+g_{i-1}x_i\right)
        \ge 0,
    \end{align*}
    where the last inequality uses the fact that $P_i$ is positive semi-definite.
\end{proof}

\section{Analysis of Formation Stability and Behavior}\label{section_stabilityAnalysis}
The stability of the error dynamics \eqref{eq_errorSystem} is analyzed in this section.
By employing a locally Lipschitz Lyapunov function and the nonsmooth analysis tools introduced in Section \ref{subsection_nonsmoothanalysis}, we prove that the origin $\varepsilon=0$ is locally finite-time stable with collision avoidance guaranteed.
In addition to the dynamics of $\varepsilon$, we also analyze the behaviors of the vehicle positions during formation convergence.

\subsection{Nonsmooth Lyapunov Function}
Consider the Lyapunov function
\begin{align*}
    V(\varepsilon)=\sum_{i=1}^n |\varepsilon_i|,
\end{align*}
which is positive definite with respect to $\varepsilon$.
Note $V(\varepsilon)$ is locally Lipschitz and convex. Hence $V(\varepsilon)$ is also regular.
By the definition of the generalized gradient, we have
\begin{align*}
    \partial V(\varepsilon)=\{ \eta=[\eta_1,\dots,\eta_n]^{\T}\in\mathbb{R}^n \ |\
        & \eta_i=\sgn(\varepsilon_i) \mbox{ if } \varepsilon_i\ne 0 \mbox{ and } \\
        & \eta_i\in[-1,1] \mbox{ if } \varepsilon_i=0 \mbox{ for } i\in\{1,\dots,n\}\}.
\end{align*}
Because $|\eta_i|=|\sgn(\varepsilon_i)|=1$ if $\varepsilon_i\ne0$, we have the obvious but important fact that
\begin{align}\label{eq_eta_norm}
    \|\eta\|\ge 1, \quad \forall \eta\in\partial V(\varepsilon), \ \forall \varepsilon\ne0.
\end{align}
In addition, if $\varepsilon_i\ne0$, $\Ln(\{\sgn(\varepsilon_i)\})=\{\sgn(\varepsilon_i)\}$; and if $\varepsilon_i=0$, $\Ln([-1,1])=\{0\}=\{\sgn(0)\}$. Thus we have the following useful property
\begin{align}\label{eq_LnpartialV}
    \Ln(\partial V(\varepsilon))=\{\sgn(\varepsilon)\}.
\end{align}

\subsection{Calculate the Filippov Differential Inclusion}
Consider the error dynamics in \eqref{eq_errorSystem}.
First, the term $\sgn(\varepsilon)$ in \eqref{eq_errorSystem} is discontinuous in $\varepsilon$.
Second, it is noticed that $\|e_i\|$ appears in the denominators of the nonzero entries of $A$.
Hence if $\|e_i\|$ can be zero, the term $A$ is also discontinuous.
Note that $\|e_i\|$ being zero simply means that the vehicles $i$ and $i+1$ are colliding with each other.
In the initial formation, it is assumed that no vehicles are collocated, i.e., $z_i(0)\ne z_j(0)$ for any $i\ne j$.
By the control law \eqref{eq_controlLaw} we have $\|\dot{z}_i\|\le\|g_i-g_{i-1}\|\le2$, which means that the maximum speed of each vehicle is $2$.
Thus
\begin{align}\label{eq_collision_time}
    T^*=\frac{\min_{i\ne j}\|z_i(0)-z_j(0)\|}{4}
\end{align}
is the minimum time when any two vehicles could possibly collide with each other.
In other words, when $t<T^*$, no vehicles collide with each other, i.e., $\|e_i(t)\|\ne0$.
In the rest of the paper, we will only consider $t\in[0,T]$ with $T<T^*$.
We will prove that the system can be stabilized within the finite time interval $[0,T]$.

Since $\|e_i(t)\|\ne0$ for all $i$ and all $t\in[0,T]$, the matrix $A$ is continuous.
Then by \cite[Theorem 1, 5)]{1987Paden}, the Filippov differential inclusion associated with the system \eqref{eq_errorSystem} can be calculated as
\begin{align}\label{eq_myFilippovInclusion}
    \dot{\varepsilon}
    &\in \mathcal{F}[-A\sgn(\varepsilon)] = -A\mathcal{F}[\sgn(\varepsilon)].
\end{align}
Because $\{\sgn(\varepsilon)\} = \Ln(\partial V(\varepsilon))$ as given in \eqref{eq_LnpartialV}, we have
\begin{align*}
    \mathcal{F}[\sgn(\varepsilon)]=\mathcal{F}[\Ln(\partial V(\varepsilon))]=\partial V(\varepsilon),
\end{align*}
where the last equality uses the fact \eqref{eq_FilippovFact}.
Thus the Filippov differential inclusion in \eqref{eq_myFilippovInclusion} can be rewritten as
\begin{align}\label{eq_differential_inclusion}
    \dot{\varepsilon}\in -A\partial V(\varepsilon).
\end{align}

\subsection{Calculate the Set-valued Lie Derivative}
The set-valued Lie derivative of $V(\varepsilon)$ with respect to \eqref{eq_differential_inclusion} is given by
\begin{align}\label{eq_set_value_Lie}
    \widetilde{\mathcal{L}}_{-A\partial V}V(\varepsilon)
    &=\{\ell\in\mathbb{R} \ |\ \exists \xi\in -A \partial V(\varepsilon),\ \forall \zeta\in\partial V(\varepsilon),\  \zeta^{\T} \xi=\ell\} \nonumber \\
    &=\{\ell\in\mathbb{R} \ |\ \exists \eta\in \partial V(\varepsilon),\  \forall \zeta\in\partial V(\varepsilon),\  -\zeta^{\T} A \eta=\ell\}.
\end{align}
When $\widetilde{\mathcal{L}}_{-A\partial V}V(\varepsilon)\ne \emptyset$, for any $\ell\in\widetilde{\mathcal{L}}_{-A\partial V}V(\varepsilon)$, there exists $\eta\in\partial V$ such that $\ell=-\zeta^{\T} A \eta$ for all $\zeta\in\partial V$.
In particular, by choosing $\zeta=\eta$ we have
\begin{align}\label{eq_a}
    \ell=-\eta^{\T} A \eta \le 0.
\end{align}
The last inequality is because $A$ is a positive semi-definite matrix as shown in Lemma \ref{lemma_A_positivedefinite}.
Now we have $\widetilde{\mathcal{L}}_{-A\partial V}V(\varepsilon)=\emptyset$ or $\max \widetilde{\mathcal{L}}_{-A\partial V}V(\varepsilon)\le0$.

\subsection{Main Convergence Result}
We need to introduce the following results before presenting our main convergence result.

Given an angle $\alpha\in\mathbb{R}$ and a vector $x\in\mathbb{R}^2$, the angle between $x$ and $R(\alpha)x$ is $\alpha$.
Thus for all nonzero $x\in\mathbb{R}^2$, $x^{\T}R(\alpha)x>0$ when $\alpha\in(-\pi/2,\pi/2)$ (mod $2\pi$); $x^{\T}R(\alpha)x=0$ when $\alpha=\pm\pi/2$ (mod $2\pi$); and $x^{\T}R(\alpha)x<0$ when $\alpha\in(\pi/2,3\pi/2)$ (mod $2\pi$).

\begin{lemma}\label{lemma_gi_perp}
Let $g_i^{\perp}=R(\pi/2)g_i$.
It is obvious that $\|g_i^{\perp}\|=1$ and $(g_i^{\perp})^{\T}g_i=0$.
Furthermore,
\begin{enumerate} [(i)]
  \item $ P_i=g_i^{\perp}(g_i^{\perp})^{\T}$.
  \item For $i\ne j$, $(g_i^{\perp})^{\T}g_j=-(g_j^{\perp})^{\T} g_i$.
  \item $(g_i^{\perp})^{\T}g_{i-1}=\sin\theta_i$. Consequently, $(g_i^{\perp})^{\T}g_{i-1}>0$ if $\theta_i\in(0,\pi)$; and $(g_i^{\perp})^{\T}g_{i-1}<0$ if $\theta_i\in(\pi,2\pi)$.
\end{enumerate}
\end{lemma}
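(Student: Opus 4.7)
The plan is to dispatch the three parts in order, each reducing to a routine geometric or algebraic fact about $2\times 2$ rotations; none of them is genuinely hard, but part (iii) requires careful bookkeeping of angles from the defining relation $g_i = R(\theta_i)(-g_{i-1})$, and that is where I expect the only real room for error.

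For part (i), I would observe that $g_i$ is a unit vector and that by construction $\{g_i,\, g_i^{\perp}\}$ is an orthonormal basis of $\mathbb{R}^2$ (unit length of $g_i^\perp$ follows from $R(\pi/2)$ being orthogonal, and orthogonality from the standard fact $x^{\T}R(\pi/2)x = 0$ for any $x\in\mathbb{R}^2$). Therefore the identity resolves as $I = g_i g_i^{\T} + g_i^{\perp}(g_i^{\perp})^{\T}$, and rearranging gives $P_i = I - g_i g_i^{\T} = g_i^{\perp}(g_i^{\perp})^{\T}$.

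For part (ii), I would write $J := R(\pi/2)$ and use that $J^{\T} = R(-\pi/2) = -J$, i.e.\ $J$ is skew-symmetric. Then $(g_i^{\perp})^{\T} g_j = g_i^{\T} J^{\T} g_j = -g_i^{\T} J g_j$. Computing the other side, $(g_j^{\perp})^{\T} g_i = -g_j^{\T} J g_i$, and since this is a scalar it equals its own transpose, $-g_j^{\T} J g_i = -g_i^{\T} J^{\T} g_j = g_i^{\T} J g_j$. Adding the two expressions gives $(g_i^{\perp})^{\T} g_j + (g_j^{\perp})^{\T} g_i = 0$, which is the claimed antisymmetry.

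For part (iii), my approach is to first rewrite $g_i = R(\theta_i)(-g_{i-1})$ as $g_{i-1} = -R(-\theta_i)g_i = R(\pi-\theta_i)g_i$, using $-I = R(\pi)$ and the homomorphism property of $R$. Then
\begin{align*}
(g_i^{\perp})^{\T} g_{i-1} = g_i^{\T} R(-\pi/2)\, R(\pi-\theta_i)\, g_i = g_i^{\T} R(\pi/2 - \theta_i)\, g_i = \cos(\pi/2-\theta_i) = \sin\theta_i,
\end{align*}
where the penultimate step invokes the identity $v^{\T}R(\alpha)v = \|v\|^2\cos\alpha$ (the same fact recalled in the paragraph preceding the lemma) with $v = g_i$ a unit vector. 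The sign statements are then immediate from the sign of $\sin$ on $(0,\pi)$ and $(\pi,2\pi)$. The one potential pitfall is getting the direction of rotation backwards when inverting the defining equation for $\theta_i$; I would double-check by testing $\theta_i = \pi/2$, where the claim predicts $(g_i^\perp)^{\T} g_{i-1} = 1$, which is easy to verify by hand.
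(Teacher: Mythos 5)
Your proof is correct and complete. Note that the paper itself does not actually prove this lemma --- it defers entirely to Lemma 5 of the authors' earlier work \cite{zhaoArxiv} --- so there is no in-paper argument to compare against; your write-up is therefore a genuine addition rather than a variant. The three steps you use are the natural ones: the resolution of the identity $I = g_ig_i^{\T} + g_i^{\perp}(g_i^{\perp})^{\T}$ for an orthonormal pair gives (i); the skew-symmetry $R(\pi/2)^{\T} = -R(\pi/2)$ gives (ii); and inverting the defining relation \eqref{eq_gigi-1} to $g_{i-1} = R(\pi-\theta_i)g_i$ and applying $v^{\T}R(\alpha)v = \|v\|^2\cos\alpha$ (the fact the paper recalls just before the lemma) gives $(g_i^{\perp})^{\T}g_{i-1} = \cos(\pi/2-\theta_i) = \sin\theta_i$ for (iii). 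The orientation check at $\theta_i = \pi/2$, where $g_{i-1} = R(\pi/2)g_i = g_i^{\perp}$ so the inner product is $1$, correctly confirms that you have not reversed the sense of rotation in the paper's convention.
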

\begin{proof}
    See \cite[Lemma 5]{zhaoArxiv}.
\end{proof}
\begin{lemma}\label{lemma_infimum_angle}
    Let $B\in\mathbb{R}^{n\times n}$ be a positive semi-definite matrix with $\lambda_1(B)=0$ and $\lambda_2(B)>0$.
    An eigenvector associated with the zero eigenvalue is $\one=[1,\dots,1]^{\T}\in\mathbb{R}^n$.
    Let
    \begin{align*}
        \mbox{
        $\mathcal{U}=\{x\in\mathbb{R}^n \ | \  \|x\|=1$ and nonzero entries of $x$ are not with the same sign\}.
        }
    \end{align*}
    Then
    \begin{align*}
        \inf_{x\in\mathcal{U}} x^{\T}Bx = \frac{\lambda_2(B)}{n}.
    \end{align*}
\end{lemma}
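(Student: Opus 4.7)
The plan is to split $x$ into its component along $\one$ and its orthogonal complement, and then combine a Rayleigh--Ritz bound on the orthogonal part with a Cauchy--Schwarz argument that uses the mixed-sign hypothesis to bound the parallel part.

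Write $x=\alpha\,\one/\sqrt{n}+y$ with $y\in\one^\perp$ and $\alpha=\one^{\T}x/\sqrt{n}$. Since $B\one=0$ and $B=B^{\T}$, the cross terms vanish, so $x^{\T}Bx=y^{\T}By$. The subspace $\one^\perp$ is $B$-invariant, and by the spectral theorem the smallest eigenvalue of $B$ restricted to $\one^\perp$ is $\lambda_2(B)$ (the zero eigenvalue has been peeled off). Hence the Rayleigh--Ritz inequality gives $y^{\T}By\ge\lambda_2(B)\|y\|^2$. Because $\|x\|=1$, this reads
\begin{align*}
    x^{\T}Bx\ \ge\ \lambda_2(B)\|y\|^2\ =\ \lambda_2(B)\bigl(1-\alpha^2\bigr)\ =\ \lambda_2(B)\Bigl(1-\tfrac{(\one^{\T}x)^2}{n}\Bigr),
\end{align*}
so everything reduces to showing $(\one^{\T}x)^2\le n-1$ whenever $x\in\mathcal{U}$.

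This is exactly where the mixed-sign hypothesis enters. After replacing $x$ by $-x$ if necessary (noting that $\mathcal{U}=-\mathcal{U}$), assume $\one^{\T}x\ge 0$. By hypothesis there is some index $i^\star$ with $x_{i^\star}<0$. Applying Cauchy--Schwarz to the $(n-1)$-vector of ones on the indices $j\neq i^\star$,
\begin{align*}
    \one^{\T}x\ =\ x_{i^\star}+\sum_{j\neq i^\star}x_j\ \le\ \sum_{j\neq i^\star}x_j\ \le\ \sqrt{n-1}\,\Bigl(\sum_{j\neq i^\star}x_j^2\Bigr)^{1/2}\ \le\ \sqrt{n-1}\,\|x\|\ =\ \sqrt{n-1}.
\end{align*}
Squaring gives $(\one^{\T}x)^2\le n-1$, so $\|y\|^2\ge 1/n$ and therefore $x^{\T}Bx\ge\lambda_2(B)/n$.

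For tightness I would exhibit a minimizing sequence $\{x_k\}\subset\mathcal{U}$ by perturbing a boundary vector $x_\star\in\overline{\mathcal{U}}$ that saturates $(\one^{\T}x_\star)^2=n-1$; the natural candidate has one coordinate zero and the remaining $n-1$ coordinates equal to $1/\sqrt{n-1}$. Perturbing the zero entry slightly negative and compensating on the unit sphere, with the perturbation chosen in the $\lambda_2(B)$-eigenspace of $B$, produces $x_k\in\mathcal{U}$ with $x_k^{\T}Bx_k\to\lambda_2(B)/n$. The main obstacle I foresee is this last step: the perturbation must simultaneously be aligned with the $\lambda_2(B)$-eigenspace and respect the mixed-sign condition, which effectively forces the zeroed coordinate in $x_\star$ to be chosen compatibly with the eigenstructure of $B$. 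The lower bound itself is a clean Rayleigh--Ritz plus Cauchy--Schwarz argument and should go through with essentially no friction.
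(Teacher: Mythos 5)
Your lower-bound argument is correct and complete: the orthogonal decomposition $x=\alpha\one/\sqrt{n}+y$, the Rayleigh--Ritz bound $y^{\T}By\ge\lambda_2(B)\|y\|^2$ on $\one^{\perp}$ (valid because $\lambda_1(B)=0<\lambda_2(B)$ makes the kernel of $B$ exactly $\myspan\{\one\}$), and the Cauchy--Schwarz estimate $(\one^{\T}x)^2\le n-1$ extracted from the mixed-sign hypothesis together give $x^{\T}Bx\ge\lambda_2(B)/n$ on $\mathcal{U}$. For calibration, the paper contains no proof of this lemma (it only cites Lemma 1 of \cite{zhaoArxiv}), and the only way the lemma is used --- in the chain of inequalities leading to \eqref{eq_etaAeta_upperbound} --- is precisely as this lower bound. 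So the part you have nailed down is the part the paper actually needs.

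The obstacle you flag in the tightness step is not a technicality you failed to resolve; it is fatal, because the stated equality is false for general $B$ satisfying the hypotheses. As your own analysis shows, equality would force $x$ (up to sign and permutation) to be the vector $x_\star$ with one zero coordinate and the remaining coordinates equal to $1/\sqrt{n-1}$, and would simultaneously force its projection onto $\one^{\perp}$ --- proportional to the vector with $1-n$ in the zeroed coordinate and $1$ elsewhere --- to be a $\lambda_2(B)$-eigenvector. Nothing in the hypotheses guarantees this compatibility. Concretely, take $n=4$ and $B$ the Laplacian of the $4$-cycle, so that $x^{\T}Bx=\sum_{i=1}^{4}(x_i-x_{i+1})^2$ (indices mod $4$); then $\lambda_1(B)=0$ with eigenvector $\one$ and $\lambda_2(B)=2$, so the lemma claims the infimum is $1/2$. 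But on $\overline{\mathcal{U}}$ the minimum must occur either at an eigenvector orthogonal to $\one$ (where the form is at least $2$) or at a nonnegative unit vector with a zero entry; on the latter set, say $x_1=0$, one minimizes the tridiagonal form $x_2^2+(x_2-x_3)^2+(x_3-x_4)^2+x_4^2$ on the unit sphere, whose minimum is $2-\sqrt{2}\approx 0.586$, attained at the positive vector $(1,\sqrt{2},1)/2$. Hence
\begin{align*}
    \inf_{x\in\mathcal{U}}x^{\T}Bx=2-\sqrt{2}>\tfrac{1}{2}=\tfrac{\lambda_2(B)}{n},
\end{align*}
and the claimed equality fails. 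The correct general statement is $\inf_{x\in\mathcal{U}}x^{\T}Bx\ge\lambda_2(B)/n$, which is exactly what you proved and all that the proof of Theorem \ref{theorem_stability} requires; equality does hold in special cases, e.g.\ when $B$ restricted to $\one^{\perp}$ is a multiple of the identity (as for the $3$-cycle Laplacian), since then only the Cauchy--Schwarz step needs to be saturated and your perturbation of $x_\star$ works without any alignment condition.
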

\begin{proof}
    See \cite[Lemma 1]{zhaoArxiv}.
\end{proof}

Now we are ready to examine the elements in $\widetilde{\mathcal{L}}_{-A\partial V}V(\varepsilon)$ more closely.
Note if $\widetilde{\mathcal{L}}_{-A\partial V}V(\varepsilon)=\emptyset$, we have $\max \widetilde{\mathcal{L}}_{-A\partial V}V(\varepsilon)=-\infty$.
Hence we need only to focus on the case that $\widetilde{\mathcal{L}}_{-A\partial V}V(\varepsilon)\ne \emptyset$.
Recall for any $\ell \in \widetilde{\mathcal{L}}_{-A\partial V}V(\varepsilon)$, there exists $\eta\in\partial V$ such that $\ell=-\eta^TA\eta$ as shown in \eqref{eq_a}.
By \eqref{eq_A_is_positive_definite}, we can further write $\ell=-\eta^TA\eta$ as
\begin{align}\label{eq_etaAeta}
        \ell
        &=-\sum_{i=1}^n \frac{1}{\|e_i\|} \left(g_{i+1}\eta_{i+1}+g_{i-1}\eta_i\right)^{\T} P_i \left(g_{i+1}\eta_{i+1}+g_{i-1}\eta_i\right) \nonumber \\
        &\le -\frac{1}{\sum_{i=1}^n \|e_i\|}\sum_{i=1}^n \left(g_{i+1}\eta_{i+1}+g_{i-1}\eta_i\right)^{\T} P_i \left(g_{i+1}\eta_{i+1}+g_{i-1}\eta_i\right) \nonumber \\
        &= -\frac{1}{\sum_{i=1}^n \|e_i\|}\sum_{i=1}^n \left[\left(g_{i+1}\eta_{i+1}+g_{i-1}\eta_i\right)^{\T} g_i^\perp\right]^2 \quad \mbox{(by Lemma \ref{lemma_gi_perp} (i) )} \nonumber \\
        &= -\frac{1}{\sum_{i=1}^n \|e_i\|}\sum_{i=1}^n \left[(g_i^\perp)^{\T}g_{i+1}\eta_{i+1}+(g_i^\perp)^{\T}g_{i-1}\eta_i\right]^2 \nonumber \\
        &= - \frac{1}{\sum_{i=1}^n \|e_i\|} h^{\T}h,
\end{align}
    where
\begin{align}\label{eq_w}
       h
    &=\left[
      \begin{array}{c}
        (g_1^{\perp})^{\T}g_{2} \eta_{2} +(g_1^{\perp})^{\T}g_{n} \eta_1\\
        \vdots \\
        (g_n^{\perp})^{\T}g_{1} \eta_{1} +(g_n^{\perp})^{\T}g_{n-1} \eta_n \\
      \end{array}
    \right] \nonumber\\
    &=\left[
          \begin{array}{ccccc}
            (g_1^{\perp})^{\T}g_n & (g_1^{\perp})^{\T}g_2 & 0 & \dots & 0 \\
            0                  & (g_2^{\perp})^{\T}g_1 & (g_2^{\perp})^{\T}g_3 & \dots & 0 \\
            0                  & 0                  & (g_3^{\perp})^{\T}g_2 & \dots & 0 \\
            \vdots             & \vdots & \vdots & \ddots & \vdots \\
            (g_n^{\perp})^{\T}g_1 & 0 & \dots & 0 & (g_n^{\perp})^{\T}g_{n-1} \\
          \end{array}
        \right]
        \left[
          \begin{array}{c}
            \eta_1 \\
            \eta_2 \\
            \eta_3 \\
            \vdots \\
            \eta_n \\
          \end{array}
        \right] \nonumber\\
    &=ED\eta
\end{align}
with
\begin{align*}
    E = \left[
          \begin{array}{ccccc}
            1      & -1     & 0      & \dots & 0 \\
            0      & 1      & -1     & \dots & 0 \\
            0      & 0      & 1      & \dots & 0 \\
            \vdots & \vdots & \vdots & \ddots & \vdots \\
            -1     & 0      & \dots  & 0 & 1 \\
          \end{array}
        \right]\in\mathbb{R}^{n\times n}, \,\,
    D =   \left[
          \begin{array}{ccccc}
            (g_1^{\perp})^{\T}g_n & 0 & 0 & \dots & 0 \\
            0                  & (g_2^{\perp})^{\T}g_1 & 0 & \dots & 0 \\
            0                  & 0 & (g_3^{\perp})^{\T}g_2 & \dots & 0 \\
            \vdots & \vdots & \vdots & \ddots & \vdots \\
            0 & 0 & \dots & 0 & (g_n^{\perp})^{\T}g_{n-1} \\
          \end{array}
        \right]\in\mathbb{R}^{n\times n}.
\end{align*}
The last equality of \eqref{eq_w} uses the fact that $(g_i^{\perp})^{\T}g_{i-1}=-(g_{i-1}^{\perp})^{\T}g_{i}$ as shown in Lemma~\ref{lemma_gi_perp}~(ii).
Note that $D$ is a diagonal matrix and $E$ actually is an incidence matrix of a directed and connected circular graph.
Substituting \eqref{eq_w} into \eqref{eq_etaAeta} gives
\begin{align}\label{eq_a1}
        \ell=-\eta^{\T} A \eta \le -\frac{1}{\sum_{i=1}^n \|e_i\|} \eta^{\T} D^{\T} E^{\T} E D \eta.
\end{align}

We now present the main stability result.

\begin{theorem}\label{theorem_stability}
    Under Assumption \ref{assumption}, if no vehicles are collocated in the initial formation, i.e., $z_i(0)\ne z_j(0)$ for $i\ne j$ and $i,j\in\{1,\dots,n\}$, the equilibrium $\varepsilon=0$ of system \eqref{eq_errorSystem} is locally finite-time stable.
\end{theorem}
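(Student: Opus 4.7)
The plan is to apply Lemma \ref{lamma_finite_time} with the (regular, locally Lipschitz) Lyapunov function $V(\varepsilon) = \sum_i |\varepsilon_i|$ and the Filippov inclusion \eqref{eq_differential_inclusion}. Combining the bound \eqref{eq_a1} with $\|\eta\| \geq 1$ from \eqref{eq_eta_norm}, it suffices to construct a compact strongly invariant neighborhood $S$ of the origin together with positive constants $c_1, L$ on which $\|D\eta\|^2 \geq c_1^2\,\|\eta\|^2$ and $\sum_i \|e_i\| \leq L$. Together with the spectral lower bound $\eta^{\T} D^{\T} E^{\T} E D \eta \geq (\lambda_2(E^{\T} E)/n)\,\|D\eta\|^2$ coming from Lemma \ref{lemma_infimum_angle}, this produces $\max \widetilde{\mathcal{L}}_{-A\partial V} V(\varepsilon) \leq -\kappa < 0$ with $\kappa = \lambda_2(E^{\T} E)\, c_1^2/(nL)$ on $S \setminus \{0\}$, and Lemma \ref{lamma_finite_time} then delivers finite-time convergence with convergence time bounded by $V(\varepsilon(0))/\kappa$.

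To construct $S$, I localize about the target. Assumption \ref{assumption} gives $|\sin\theta_i^{*}| > 0$ for every $i$, so by choosing the initial $V(\varepsilon(0))$ small enough and the initial angles on the same side of $\{0,\pi\}$ as their targets, the pointwise bound $|\cos\theta_i - \cos\theta_i^{*}| \leq V(\varepsilon)$ combined with non-increase of $V$ (verified a posteriori) and continuity of the trajectory forces a uniform bound $|\sin\theta_i(t)| \geq c_1 > 0$ preserving the sign of $\sin\theta_i^{*}$. Restricting to $t \in [0,T]$ with $T < T^{*}$ from \eqref{eq_collision_time} prevents collisions and yields a uniform upper bound $\sum_i \|e_i(t)\| \leq L$. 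The candidate invariant set is the intersection of $\{\varepsilon : V(\varepsilon) \leq V(\varepsilon(0))\}$ with these position constraints.

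The technical heart is the sign analysis that licenses Lemma \ref{lemma_infimum_angle} with $B = E^{\T} E$. Since $E$ is the incidence matrix of a connected circular graph, $\Null(E^{\T} E) = \myspan\{\one\}$ and $\lambda_2(E^{\T} E) > 0$. Compute $[D\eta]_i = \sin\theta_i \cdot \eta_i$; when $\varepsilon_i \neq 0$ one has $\eta_i = \sgn(\varepsilon_i)$, and a case split on whether $\theta_i^{*} \in (0,\pi)$ or $\theta_i^{*} \in (\pi,2\pi)$ (using the monotonicity of $\cos$ on each interval and the sign of $\sin\theta_i$ inherited from $\sin\theta_i^{*}$) gives in both cases
\begin{align*}
  \sgn([D\eta]_i) = -\sgn(\theta_i - \theta_i^{*}) \qquad \text{whenever } \varepsilon_i \neq 0.
\end{align*}
The invariance of the interior-angle sum, $\sum_i \theta_i(t) \equiv \sum_i \theta_i^{*}$, forces $\sum_i (\theta_i - \theta_i^{*}) = 0$; so for $\varepsilon \neq 0$ the nonzero differences $\theta_i - \theta_i^{*}$ cannot all share one sign, hence neither can the corresponding entries of $D\eta$. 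Applying Lemma \ref{lemma_infimum_angle} to $D\eta/\|D\eta\|$ and using $|\sin\theta_i| \geq c_1$ together with $\|\eta\| \geq 1$ closes the estimate.

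The main obstacle I anticipate is the treatment of indices where $\varepsilon_i = 0$ while $\varepsilon \neq 0$: there $\eta_i$ ranges freely over $[-1,1]$, so the set-valued Lie derivative must be bounded uniformly over these free choices. The conservation argument remains decisive: the nonzero-$\varepsilon$ indices alone already contribute entries of $D\eta$ of both signs, so no selection at the vanishing indices can produce a vector of uniform nonzero sign, and the entrywise estimate $\|D\eta\|^2 \geq c_1^2 \|\eta\|^2$ continues to hold since every component of $D$ is bounded below by $c_1$. Once this subtlety is dispatched, choosing $V(\varepsilon(0))$ small enough to guarantee $V(\varepsilon(0))/\kappa < T^{*}$ completes the proof: the system reaches $\varepsilon = 0$ strictly before any potential collision.
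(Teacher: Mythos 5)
Your proposal is correct and follows essentially the same route as the paper's proof: the same Lyapunov function $V(\varepsilon)=\sum_i|\varepsilon_i|$, the bound \eqref{eq_a1} combined with Lemma \ref{lemma_infimum_angle} and $\|\eta\|\ge 1$, the conservation of $\sum_i\theta_i$ to show the nonzero entries of $D\eta$ cannot share a sign, localization so that $\sin\theta_i$ stays bounded away from zero with fixed sign, and the collision-time bound $T^*$ to close the argument via Lemma \ref{lamma_finite_time}. The only cosmetic difference is that you establish $\sgn([D\eta]_i)=-\sgn(\theta_i-\theta_i^*)$ by a direct monotonicity case split on $\cos$ rather than via the paper's diagonal factor $W=\mydiag\{w_i\}$ with $w_i=(\cos\theta_i-\cos\theta_i^*)/(\theta_i-\theta_i^*)$, and you are somewhat more explicit about the indices where $\varepsilon_i=0$.
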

\begin{proof}
Consider the time interval $[0,T]$ with $T<T^*$.
The minimum collision time $T^*$ is given in \eqref{eq_collision_time}.
Hence for all $t\in[0,T]$, we have $\|e_i(t)\|\ne0$ and $\|e_i(t)\|\ne+\infty$.
We will prove that $\varepsilon$ can converge to zero in the finite time interval $[0,T]$ if $\varepsilon(0)$ is sufficiently small.

Let $\Omega(\varepsilon(0))=\{\varepsilon\in\mathbb{R}^n \ |\ V(\varepsilon)\le V(\varepsilon(0))\}$.
Since $V(\varepsilon)=\sum_{i=1}^n |\varepsilon_i|=\|\varepsilon\|_1$, the level set $\Omega(\varepsilon(0))$ is connected and compact.
Because $\widetilde{\mathcal{L}}_{-A\partial V}V(\varepsilon)=\emptyset$ or $\max \widetilde{\mathcal{L}}_{-A\partial V}V(\varepsilon)\le 0$ for any $\varepsilon\in\Omega(\varepsilon(0))$, we know that $\Omega(\varepsilon(0))$ is strongly invariant for \eqref{eq_errorSystem} over $[0,T]$ by Lemma \ref{lemma_invariancePrinciple}.

Denote $\delta_i=\theta_i-\theta_i^*$ and $\delta=[\delta_1,\dots,\delta_n]^{\T}\in\mathbb{R}^n$.
Because $\sum_{i=1}^n \theta_i \equiv \sum_{i=1}^n \theta_i^*$, we have $\sum_{i=1}^n \delta_i = 0$.
Thus if $\delta\ne0$, the nonzero entries of $\delta$ are \emph{not} with the same sign.
Let
\begin{align*}
    w_i=\frac{\cos \theta_i - \cos \theta_i^*}{\theta_i-\theta_i^*}.
\end{align*}
Then $\varepsilon_i=w_i\delta_i$ and hence
\begin{align*}
    \varepsilon=W\delta,
\end{align*}
where $W=\mathrm{diag}\{w_1,\dots,w_n\}\in\mathbb{R}^{n\times n}$.
Since $\lim_{\theta_i\rightarrow\theta_i^*} w_i=-\sin\theta_i^*$, the equations $\varepsilon_i=w_i\delta_i$ and $\varepsilon = W \delta$ are always valid even when $\theta_i-\theta_i^*=0$.
There exists sufficiently small $V(\varepsilon(0))$ such that $\theta_i(0)$ is sufficiently close to $\theta_i^*$ and hence $\theta_i,\theta_i^*\in(0,\pi)$ or $\theta_i,\theta_i^*\in(\pi,2\pi)$ for all $\varepsilon\in\Omega(\varepsilon(0))$.
Note that $w_i<0$ if $\theta_i,\theta_i^*\in(0,\pi)$, and $w_i>0$ if $\theta_i,\theta_i^*\in(\pi,2\pi)$.
Moreover, recall $(g_i^{\perp})^{\T}g_{i-1}>0$ when $\theta_i\in(0,\pi)$, and $(g_i^{\perp})^{\T}g_{i-1}<0$ when $\theta_i\in(\pi,2\pi)$ as shown in Lemma \ref{lemma_gi_perp} (iii).
Thus we have
\begin{align*}
    (g_i^{\perp})^{\T}g_{i-1}w_i<0
\end{align*}
for all $i\in\{1,\dots,n\}$ and consequently the diagonal entries of $DW$ are with the same sign.
Suppose $\varepsilon\ne0$ and hence $\delta\ne0$.
Because the nonzero entries in $\delta$ are \emph{not} with the same sign, the nonzero entries of $DW\delta$ and hence $D\varepsilon$ are \emph{not} with the same sign either.
Furthermore, because $\eta_i=\sgn(\varepsilon_i)$ if $\varepsilon_i\ne 0$, the nonzero entry $\varepsilon_i$ has the same sign with $\eta_i$.
As a result, the nonzero entries of $D\eta$ are \emph{not} with the same sign either.
Thus we have
\begin{align*}
    \frac{D\eta}{\|D\eta\|}\in \mathcal{U}
\end{align*}
with $\mathcal{U}$ defined in Lemma \ref{lemma_infimum_angle}.
In addition, note $E$ is an incidence matrix of a directed and connected circular graph.
By \cite[Theorem 8.3.1]{graphbook}, we have $\rank(E)=n-1$ and $\Null(E^{\T}E)=\Null(E)=\{\one\}$.
Thus inequality \eqref{eq_a1} implies
\begin{align}\label{eq_etaAeta_upperbound}
    \ell
    &=-\eta^{\T} A \eta \nonumber\\
    &\le -\frac{1}{\sum_{i=1}^n \|e_i\|} \frac{\lambda_2(E^{\T}E)}{n}\|D\eta\|^2 \quad \mbox{(by Lemma \ref{lemma_infimum_angle})}\nonumber \\
    &\le -\frac{1}{\sum_{i=1}^n \|e_i\|} \frac{\lambda_2(E^{\T}E)}{n}\lambda_1(D^2)\|\eta\|^2 \nonumber \\
    &\le -\frac{1}{\sum_{i=1}^n \|e_i\|} \frac{\lambda_2(E^{\T}E)}{n}\lambda_1(D^2),
\end{align}
where the last inequality uses the fact $\|\eta\|\ge1$ if $\varepsilon\ne0$ as shown in \eqref{eq_eta_norm}.

Now we examine the terms $\sum_{i=1}^n \|e_i\|$ and $\lambda_1(D^2)$ in \eqref{eq_etaAeta_upperbound}.
First, over the finite time interval $[0,T]$, the quantity $\sum_{i=1}^n \|e_i\|$ cannot go to infinity because the vehicle speed is finite.
Hence there exists a constant $\gamma>0$ such that $\sum_{i=1}^n \|e_i\|\le \gamma$.
Second, since $D$ is diagonal, we have $\lambda_1(D^2)=\min_{i}[D]_{ii}^2$.
At the equilibrium point $\varepsilon=0$ (i.e., $\theta_i=\theta_i^*$ for all $i$), we have $[D]_{ii}=(g_i^{\perp})^{\T}g_{i-1}\ne 0$ because $\theta_i^*\ne 0$ or $\pi$ as stated in Assumption \ref{assumption}.
By continuity, we can still have $[D]_{ii}\ne 0$ for all $\varepsilon\in \Omega(\varepsilon(0))$ if $\varepsilon(0)$ is sufficiently small.
Because $\Omega(\varepsilon(0))$ is compact, there exist a lower bound $\beta$ such that $\lambda_1(D^2)\ge \beta$ for all $\varepsilon\in\Omega(\varepsilon(0))$.
Then \eqref{eq_etaAeta_upperbound} can be rewritten as
\begin{align}\label{eq_etaAeta_upperbound_final}
    \ell
    &=-\eta^{\T} A \eta \le -\frac{\beta\lambda_2(E^{\T}E)}{\gamma n} \triangleq -\kappa<0, \quad \forall \varepsilon\in\Omega(\varepsilon(0))\setminus\{0\}.
\end{align}

If $\varepsilon=0$ we have $0\in\widetilde{\mathcal{L}}_{-A\partial V}V(\varepsilon)$ because of \eqref{eq_set_value_Lie} and the fact that $0\in\partial V(0)$; and if $\varepsilon\ne0$ we have $0\notin\widetilde{\mathcal{L}}_{-A\partial V}V(\varepsilon)$ because $\max \widetilde{\mathcal{L}}_{-A\partial V}V(\varepsilon)<0$ by \eqref{eq_etaAeta_upperbound_final}.
Thus by the definition \eqref{eq_Z_f_V}, we have
\begin{align}\label{eq_equilibrium}
    Z_{-A\sgn(\varepsilon),V(\varepsilon)}=\{0\}.
\end{align}
Based on \eqref{eq_etaAeta_upperbound_final}, \eqref{eq_equilibrium} and Lemma \ref{lamma_finite_time}, any solution of \eqref{eq_errorSystem} starting from $\varepsilon(0)$ converges to $\varepsilon=0$ in finite-time.
The convergence time is upper bounded by $V(\varepsilon(0))/\kappa$.
If $V(\varepsilon(0))$ is sufficiently small, we can have
\begin{align*}
    \frac{V(\varepsilon(0))}{\kappa}<T<T^*,
\end{align*}
which means that the system can be stabilized within the time interval $[0,T]$.
\end{proof}

\subsection{Formation Behavior}\label{section_formationBehavior}
Because the target formation is constrained only by angles, the positions of the vehicles or the inter-vehicle distance are not specified in the final converged formation.
In addition to the dynamics of $\varepsilon$, it is also important to study the evolution of the vehicle positions $z=\left[z_1^{\T},\dots,z_n^{\T}\right]^{\T}\in\mathbb{R}^{2n}$.
Next we identify a number of behaviors of the formation controlled by the control law \eqref{eq_controlLaw}.

Firstly, from the control law \eqref{eq_controlLaw}, it is trivial to see that $\dot{z}=0$ if $\varepsilon=0$, which means that all vehicles will stop moving if all angle errors have converged to zero.

Secondly, recall the error dynamics is given by $\dot{\varepsilon}=-A\sgn(\varepsilon)$ as shown in \eqref{eq_errorSystem}.
Similar to the derivation of \eqref{eq_a1}, it can be shown that
\begin{align*}
    \sgn(\varepsilon)^{\T}A\sgn(\varepsilon)\ge \frac{1}{\sum_{i=1}^n\|e_i\|}\sgn(\varepsilon)^{\T}D^{\T}E^{\T}ED\sgn(\varepsilon).
\end{align*}
Furthermore, analogous to \eqref{eq_etaAeta_upperbound}, we have $\sgn(\varepsilon)^{\T}A\sgn(\varepsilon)=0$ if and only if $\varepsilon=0$
though $A$ is merely positive semi-definite.
Since $A\sgn(\varepsilon)=0$ if and only if $\sgn(\varepsilon)^{\T}A\sgn(\varepsilon)=0$, we obtain that $\dot{\varepsilon}=-A\sgn(\varepsilon)=0$ if and only if $\varepsilon=0$.
As a result, as long as the angle errors are nonzero, the angles will keep changing.
Hence it is impossible that the formation is moving while all angles are not changing.
In other words, we can rule out the possibility that only the orientation, translation or scale of the formation is changing while the angles are not.

Thirdly, suppose the target formation is achieved at time $t_f$.
The proof of Theorem \ref{theorem_stability} suggests that $t_f\in[0,V(\varepsilon(0))/\kappa]$.
Since $\|\dot{z}_i\|\le\|g_i-g_{i-1}\|\le2$, we have $\|z_i(t_f)-z_i(0)\|\le 2V(\varepsilon(0))/\kappa$.
Therefore, the final converged position $z_i(t_f)$ is sufficiently close to its initial position $z_i(0)$ if the initial angle error $\varepsilon(0)$ is sufficiently small.
In other words, it is impossible that the formation moves through a very long distance given very small initial angle errors.

\section{Simulation Results}\label{section_simulation}
In this section, we present simulation results to illustrate our theoretical analysis.
Figures \ref{fig_simulation_3}, \ref{fig_simulation_4}, \ref{fig_simulation_5} and \ref{fig_simulation_8} respectively show the formation control of three, four, five and eight vehicles.
As shown in the simulations, the proposed control law can efficiently reduce the angle errors and stabilize the formation in finite time.
In our stability proof, we assume that the initial angle error $\varepsilon(0)$ should be sufficiently small such that the initial angle $\theta_i$ and the target angle $\theta_i^*$ are in either $(0,\pi)$ or $(\pi,2\pi)$.
However, as shown in Figures~\ref{fig_simulation_4} and \ref{fig_simulation_8}, even if $\theta_i$ and $\theta_i^*$ may be respectively in the two intervals $(0,\pi)$ and $(\pi,2\pi)$, the formation can still be stabilized.
Hence the simulation suggests that the attractive region of the target formation by the proposed control law is not necessarily small.

\begin{figure}
    \centering
  \subfloat[Vehicle trajectory]{\includegraphics[width=0.5\linewidth]{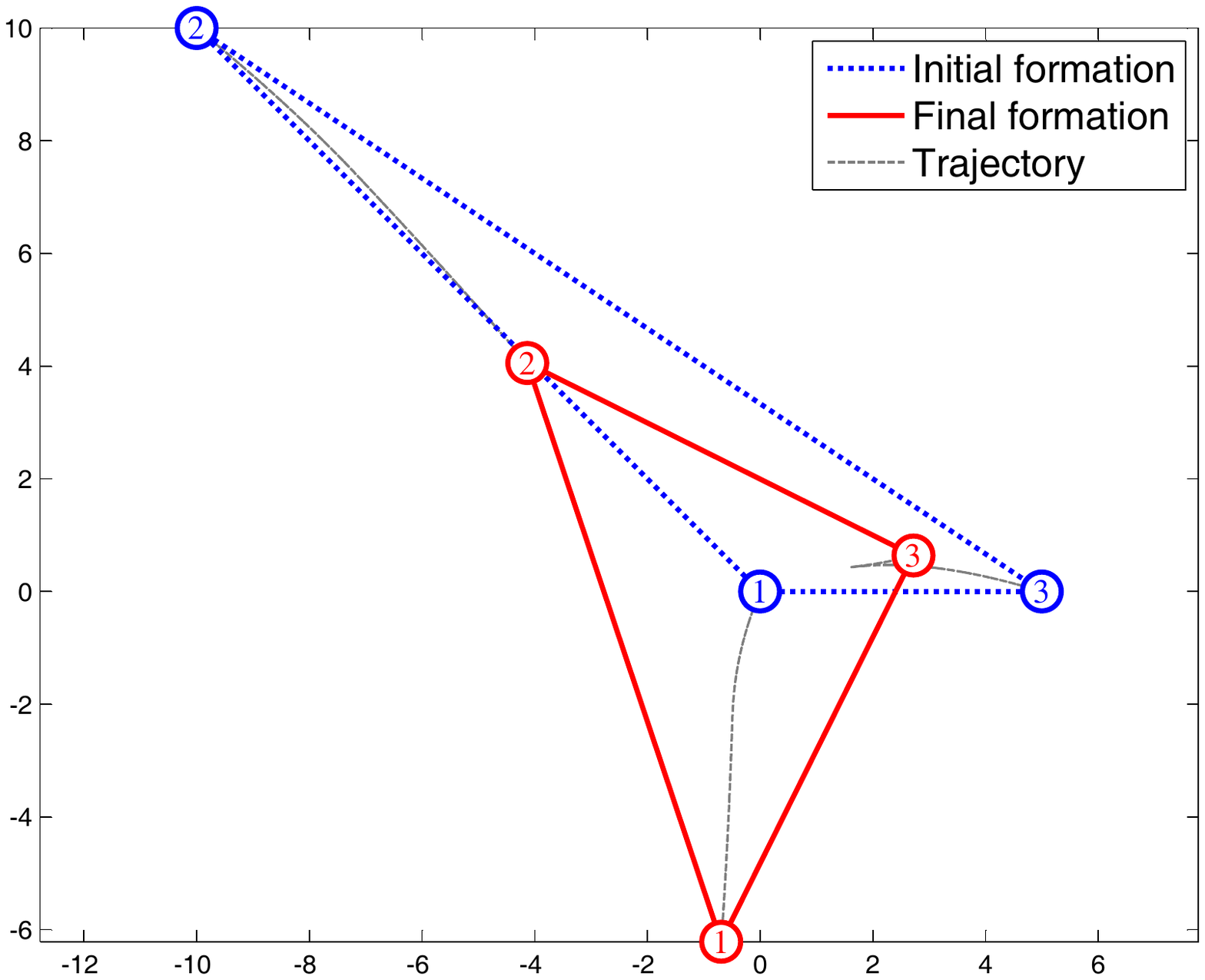}}
  \subfloat[Angle error and Lyapunov function]{\includegraphics[width=0.5\linewidth]{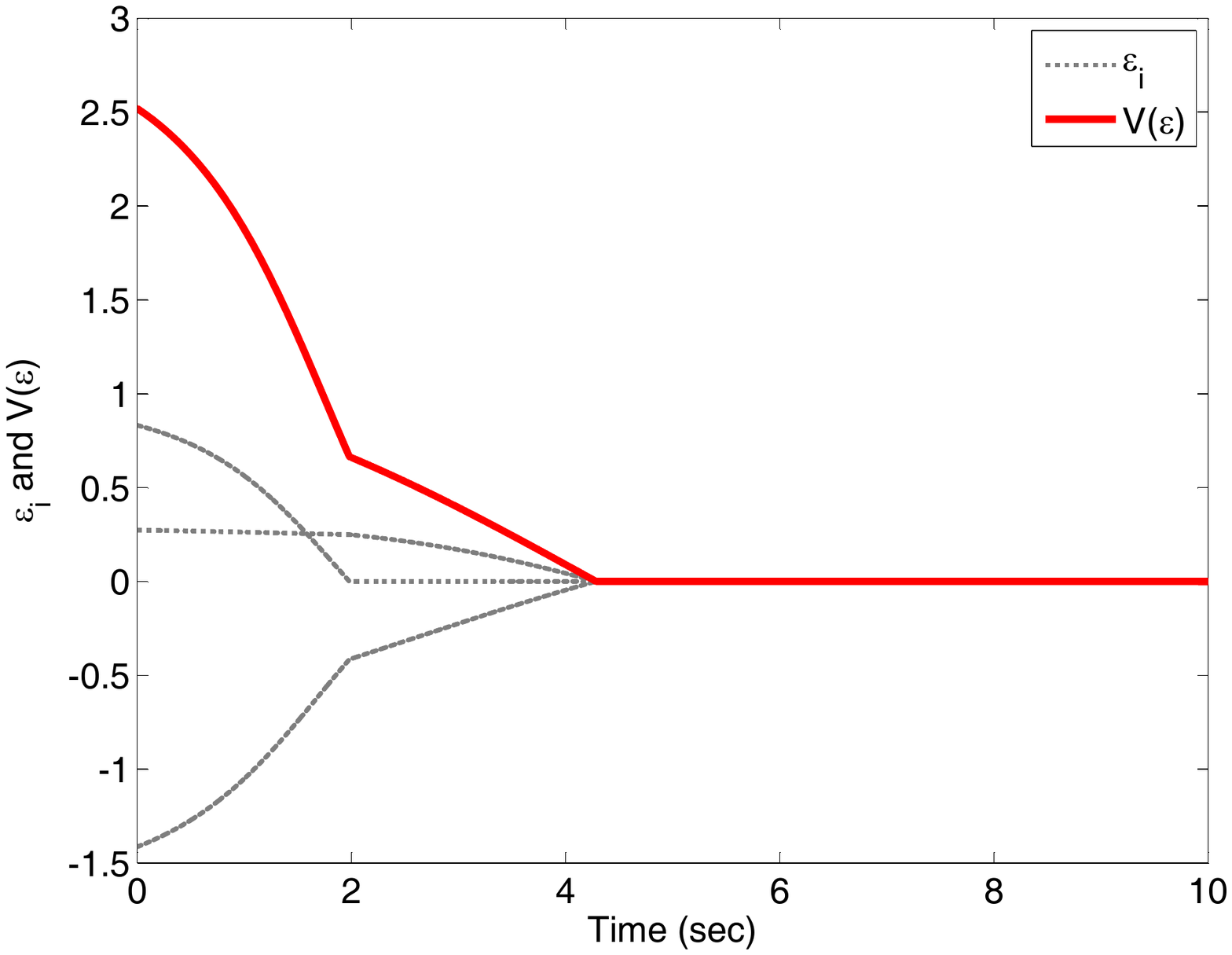}}
  \caption{Control results by the proposed control law with $n=3$, $\theta_1^*=\theta_2^*=45$ deg and $\theta_3^*=90$ deg.}
  \label{fig_simulation_3}
\end{figure}
\begin{figure}
    \centering
  \subfloat[Vehicle trajectory]{\includegraphics[width=0.5\linewidth]{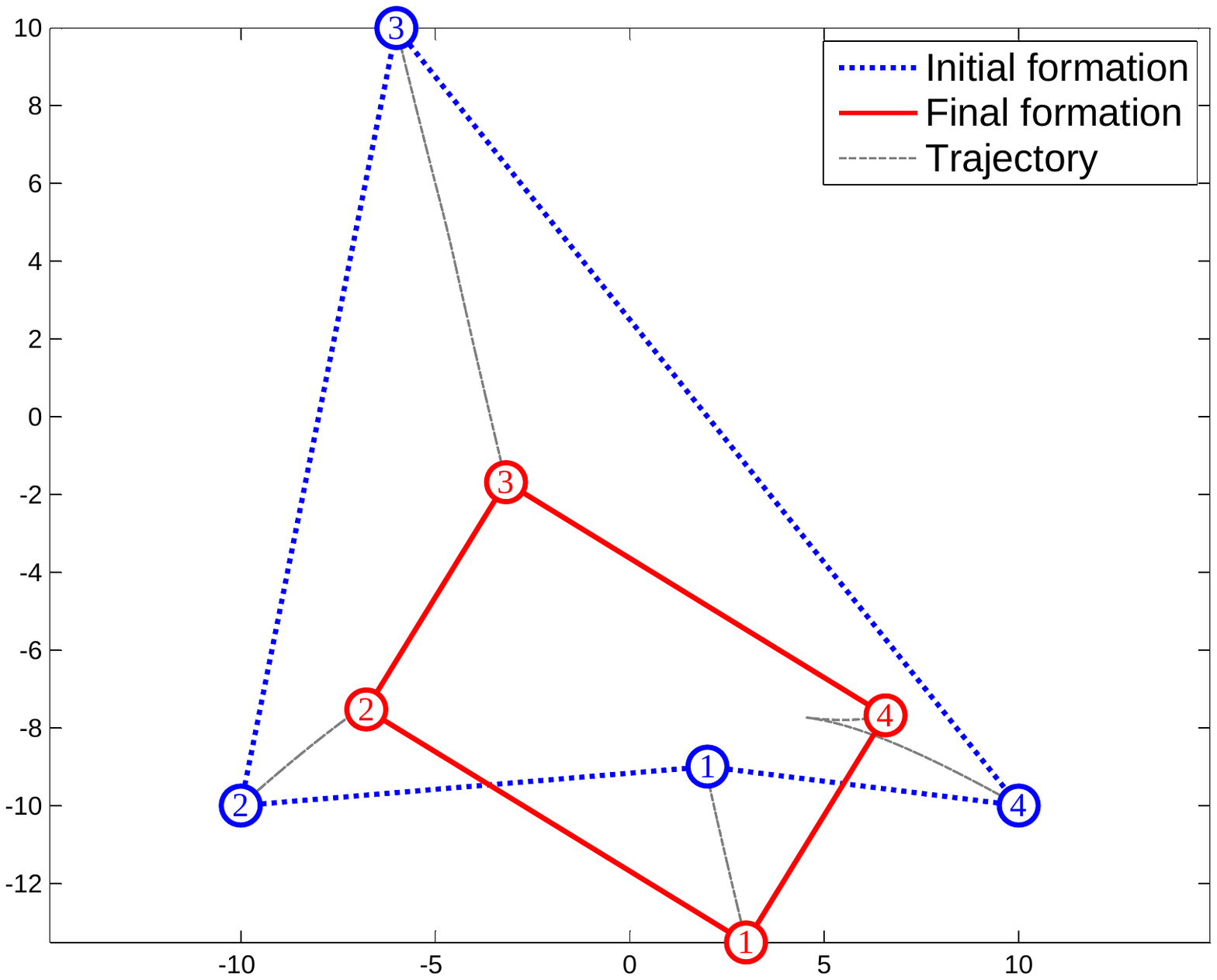}}
  \subfloat[Angle error and Lyapunov function]{\includegraphics[width=0.5\linewidth]{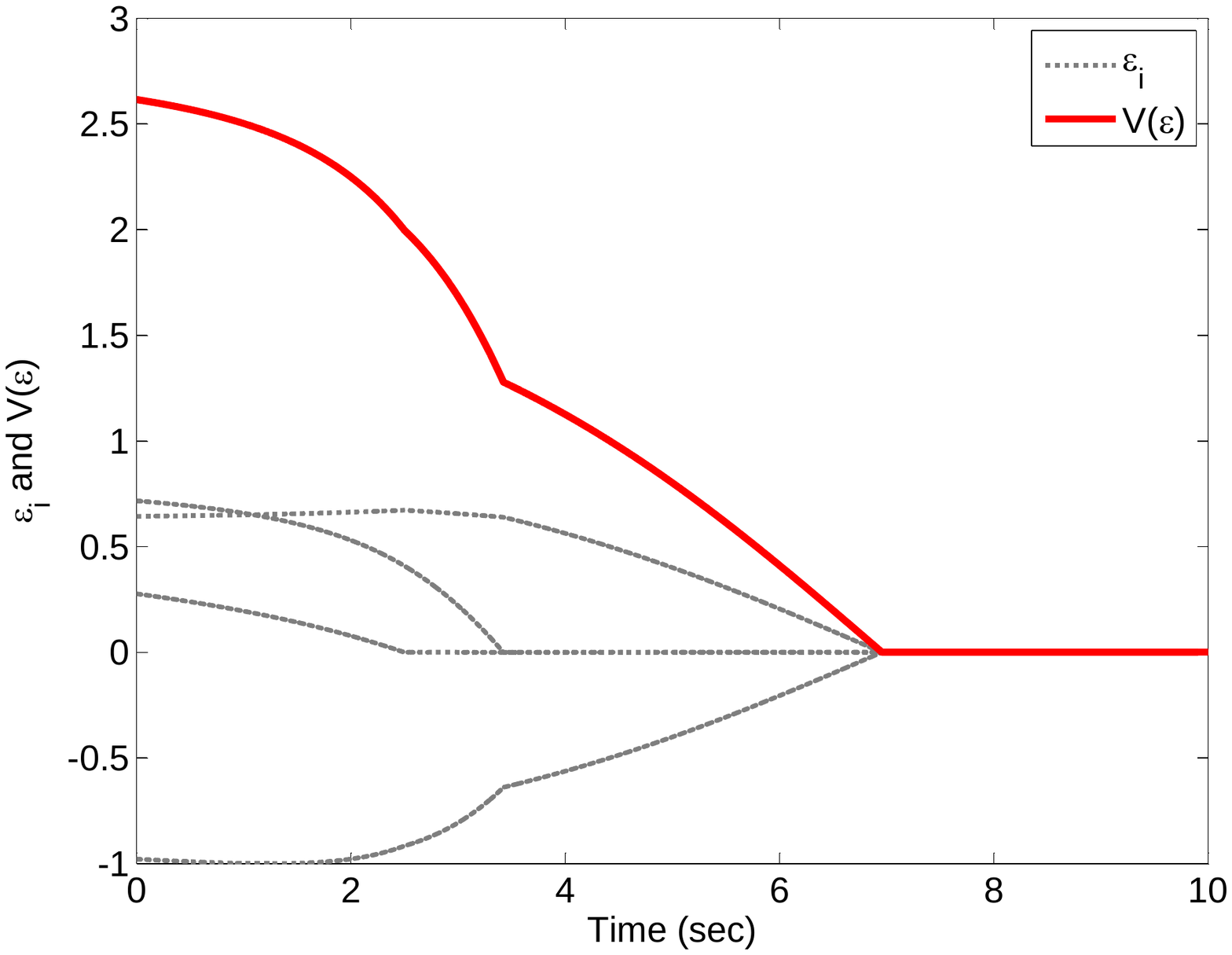}}
  \caption{Control results by the proposed control law with $n=4$ and $\theta_1^*=\dots=\theta_4^*=90$ deg.}
  \label{fig_simulation_4}
\end{figure}
\begin{figure}
    \centering
  \subfloat[Vehicle trajectory]{\includegraphics[width=0.5\linewidth]{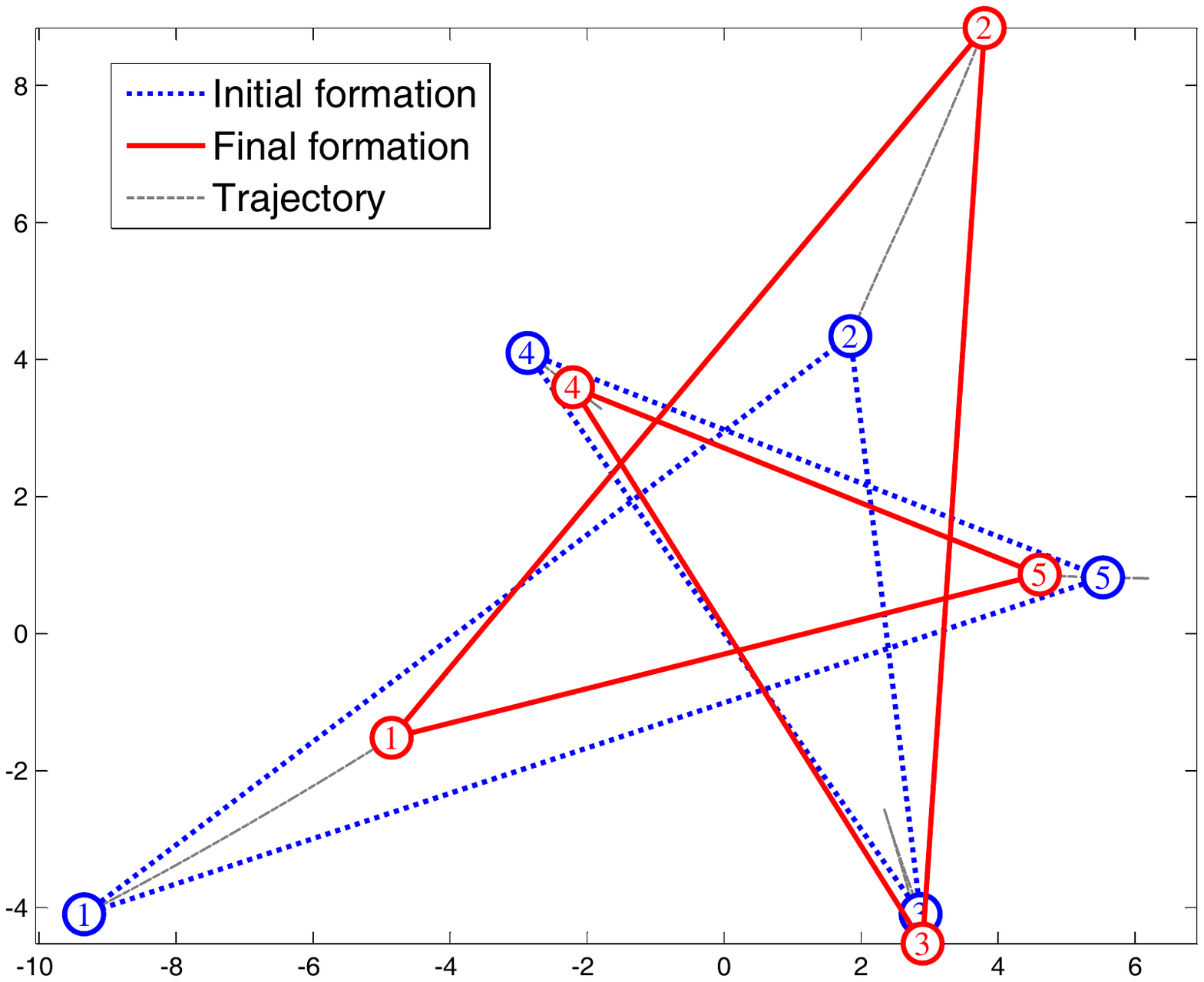}}
  \subfloat[Angle error and Lyapunov function]{\includegraphics[width=0.5\linewidth]{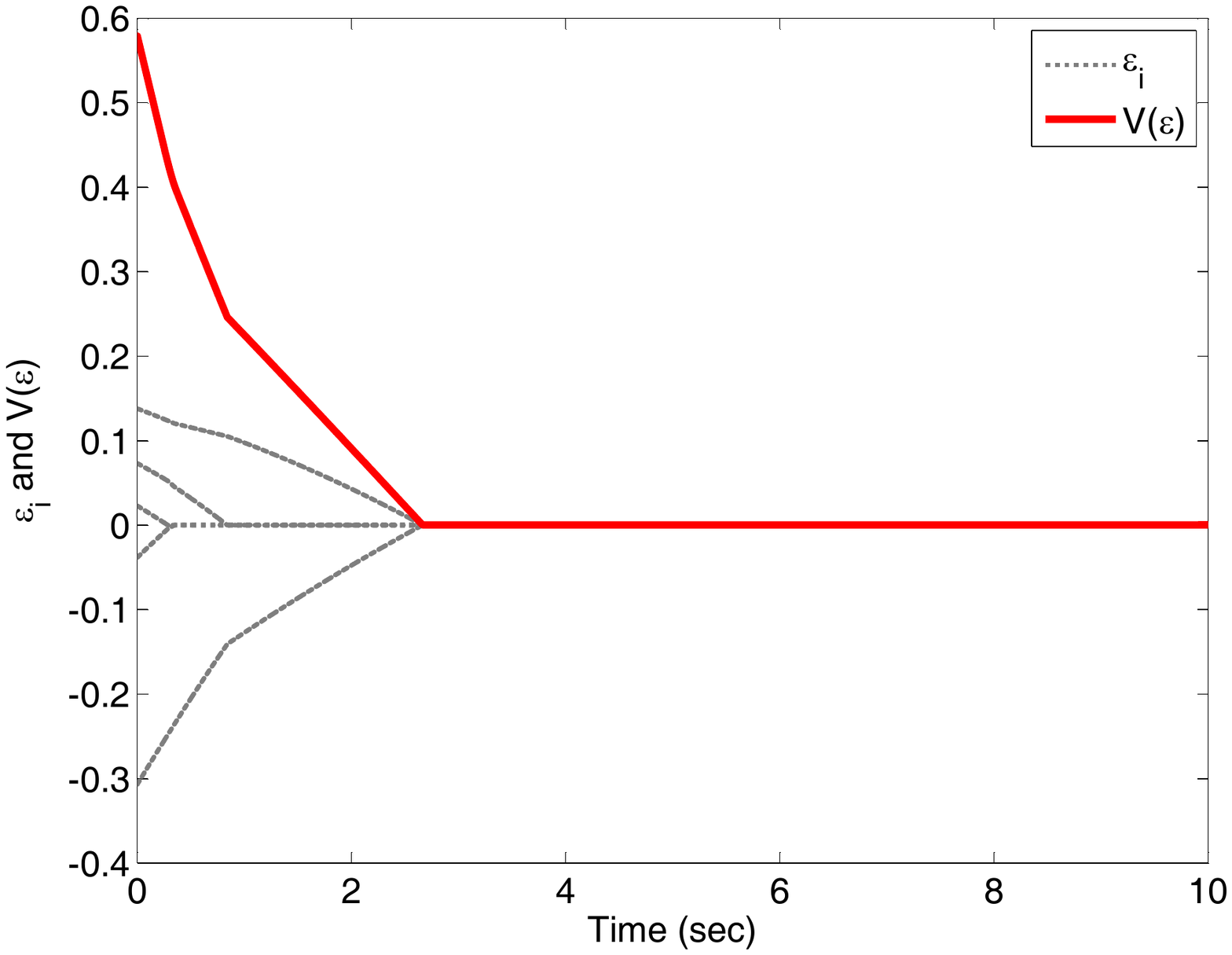}}
  \caption{Control results by the proposed control law with $n=5$ and $\theta_1^*=\dots=\theta_5^*=36$ deg.}
  \label{fig_simulation_5}
\end{figure}
\begin{figure}
    \centering
  \subfloat[Vehicle trajectory]{\includegraphics[width=0.5\linewidth]{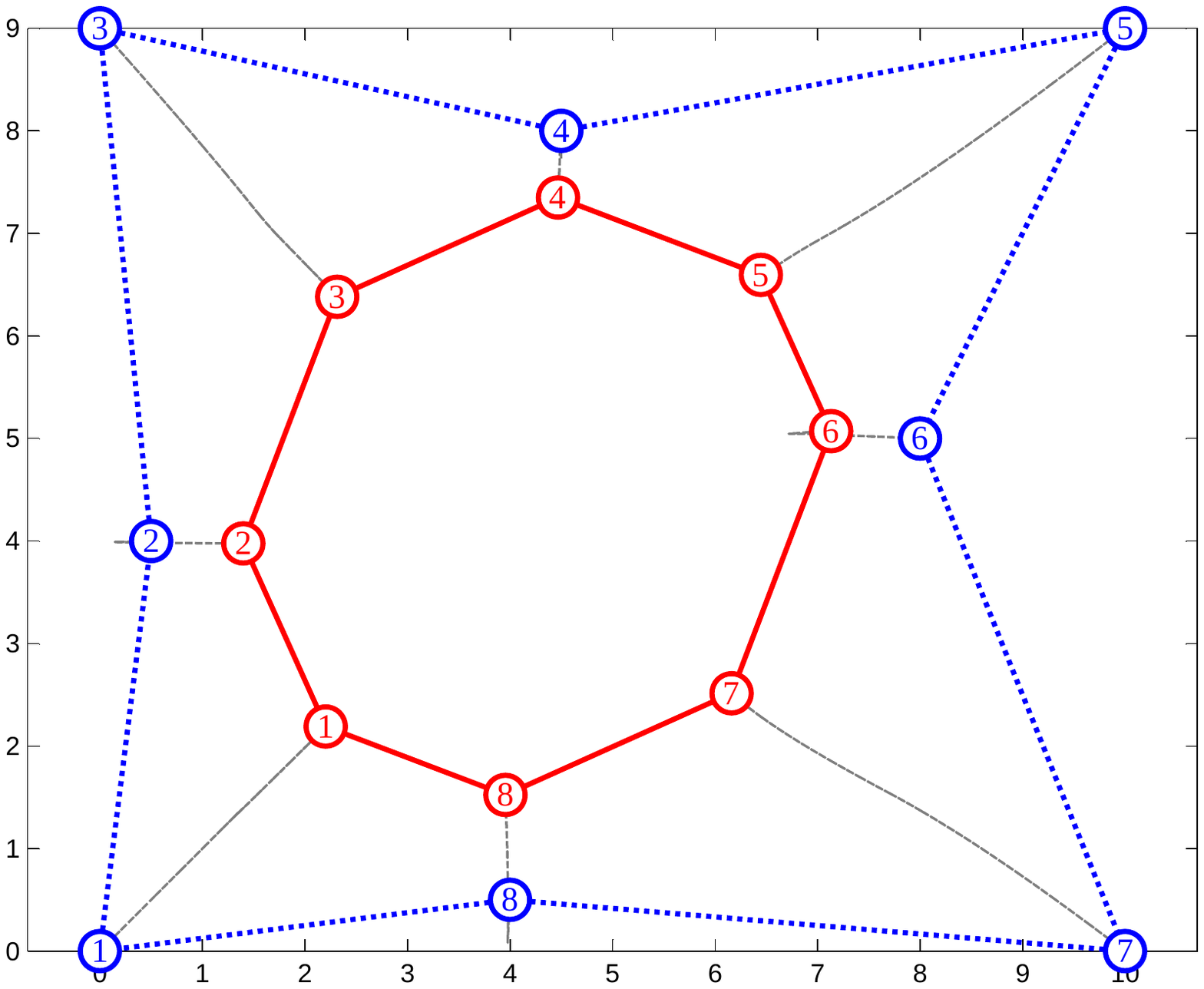}}
  \subfloat[Angle error and Lyapunov function]{\includegraphics[width=0.5\linewidth]{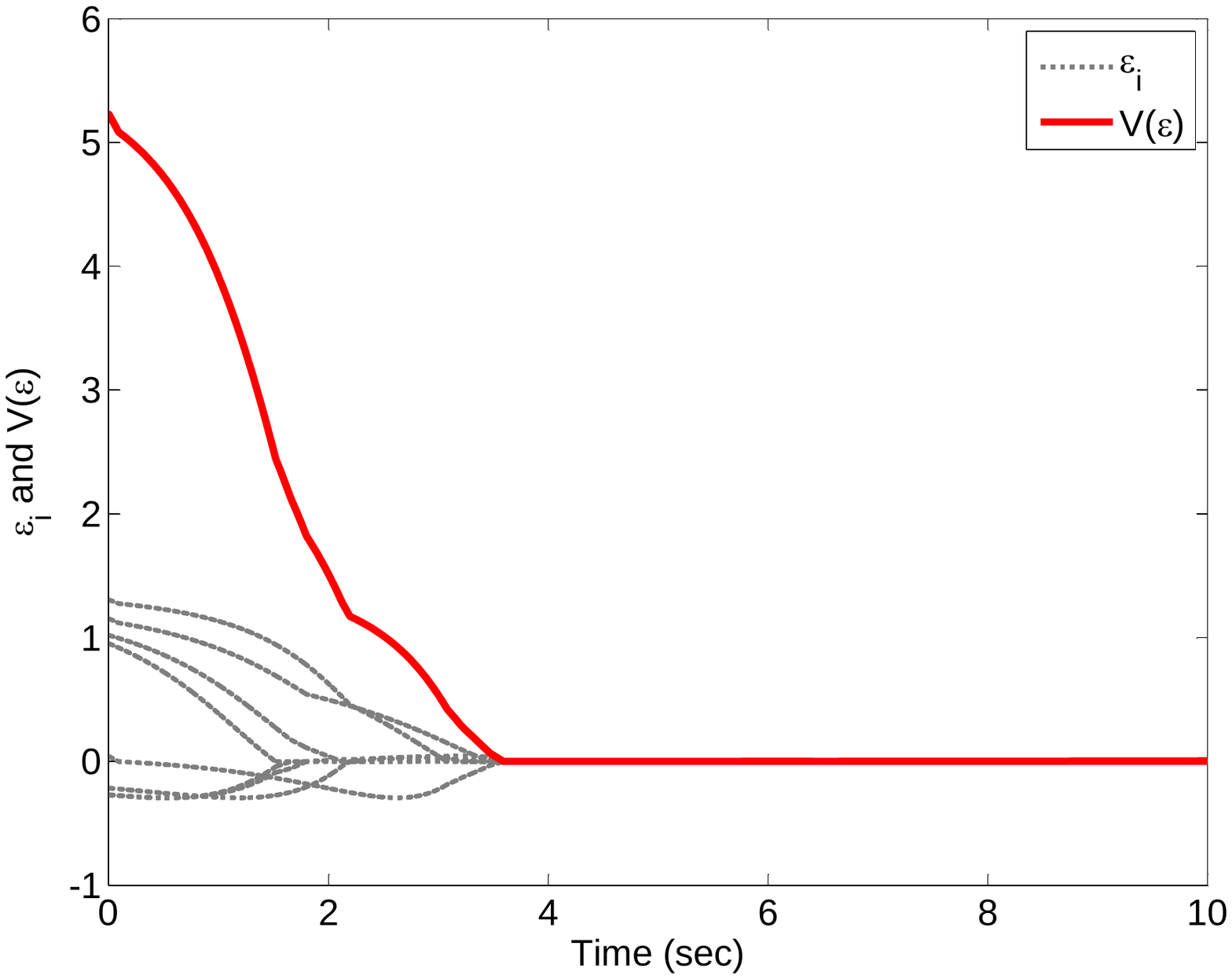}}
  \caption{Control results by the proposed control law with $n=8$ and $\theta_1^*=\dots=\theta_8^*=135$ deg.}
  \label{fig_simulation_8}
\end{figure}

\section{Conclusions}\label{section_conclusion}
We have studied the stabilization of angle-constrained circular formations using bearing-only measurements.
We have proposed a discontinuous control law, which only requires the sign information of the angle errors.
By using nosmooth stability analysis tools, we have proved that the error dynamics is locally finite-time stable with collision avoidance guaranteed.
A number of important formation behaviors have also been identified.

As observed from the simulation results, the shape of the formation cannot be controlled because the underlying information flow is a circular graph, where each vehicle is associated with only one constrained angle.
When there are more than one constrained angles at each vehicle, the formation shape may be well defined.
Then it is possible to control the formation shape using bearing-only measurements.
An immediate research plan is to extend the results in this paper to formations with more complicated underlying graphs.

\bibliography{zsybib}
\bibliographystyle{ieeetr}

\end{document}